\documentclass[conference]{IEEEtran}
\usepackage{todonotes}
\IEEEoverridecommandlockouts
\usepackage{cite}
\usepackage{amsmath,amssymb,amsfonts}
\usepackage{algorithmic}
\usepackage{graphicx}
\usepackage{textcomp}
\usepackage{xcolor}
\usepackage{booktabs}
\def\model{ADAPTD}
\def\Chain{Partitioned belief filtering}
\def\chain{partitioned belief filtering}
\def\BibTeX{{\rm B\kern-.05em{\sc i\kern-.025em b}\kern-.08em
    T\kern-.1667em\lower.7ex\hbox{E}\kern-.125emX}}

\DeclareMathOperator*{\argmax}{arg\,max}
\usepackage{xurl}

\usepackage[utf8]{inputenc}
\usepackage[T1]{fontenc}
\usepackage{bbm}
\usepackage[titlenumbered,linesnumbered ,ruled]{algorithm2e}
\usepackage{amsthm}

\usepackage{booktabs}
\usepackage{mathtools}
\usepackage{xcolor}
\usepackage{float}
\usepackage{tikz}  
\usepackage{textcomp}
\usepackage[mathscr]{euscript}
\usepackage{acro}
\usepackage[framemethod=tikz]{mdframed}
\usepackage{lipsum}
\mathtoolsset{showonlyrefs}
\usepackage{multirow}
\DeclareSymbolFont{rsfs}{U}{rsfs}{m}{n}
\DeclareSymbolFontAlphabet{\mathscrsfs}{rsfs}

\newtheorem{proposition}{\textbf{Proposition}}
\newtheorem{theorem}{\textbf{Theorem}}

\newtheorem{assumption}{\textbf{Assumption}}
\newtheorem{definition}{\textbf{Definition}}
\DeclareAcronym{MR}{
  short = MR,
  long  = max ratio,
  long-plural = s
}
\DeclareAcronym{SOC}{
  short = SOC,
  long  = security operations center,
  long-plural = s
}
\DeclareAcronym{POMDP}{
  short = POMDP,
  long  = partially observable Markov decision process,
  long-plural = es
}
\DeclareAcronym{MKL}{
  short = MKL,
  long  = max KL,
  long-plural = s
}
\DeclareAcronym{SA}{
  short = SA,
  long  = situational awareness,
  long-plural = es
}
\DeclareAcronym{IDS}{
  short = IDS,
  long  = intrusion detection system,
  long-plural = s
}
\DeclareAcronym{KL}{
  short = KL,
  long  = Kullback–Leibler,
  long-plural = 
}
\DeclareAcronym{DoS}{
  short = DoS,
  long  = denial of service,
  long-plural = s
}
\DeclareAcronym{MTTR}{
  short = MTTR,
  long  = mean time to remediate,
  long-plural = 
}
\DeclareAcronym{HMM}{
  short = HMM,
  long  = hidden Markov model,
  long-plural = s
}
\DeclareAcronym{ML}{
  short = ML,
  long  = machine learning,
  long-plural = s
}
\DeclareAcronym{IRS}{
  short = IRS,
  long  = intrusion response system,
  long-plural = s
}
\DeclareAcronym{CUSUM}{
  short = CUSUM,
  long  = cumulative sum control chart,
  long-plural = s
}
\DeclareAcronym{MSGPRT}{
short= MSGPRT,
long = multihypothesis sequential generalized probability
ratio test,
long-plural =s
}
\DeclareAcronym{BSPRT}{
short= BSPRT,
long = binary hypothesis sequential probability
ratio test,
long-plural =s
}
\DeclareAcronym{CHT}{
short= CHT,
long = composite hypothesis test
ratio test,
long-plural =s
}
\DeclareAcronym{ACL}{
short= ACL,
long = access control list,
long-plural =s
}
\DeclareAcronym{GLRT}{
short= GLRT,
long = generalized likelihood ratio test,
long-plural =s
}

\DeclareAcronym{MTBFD}{
short= MTBFD,
long = mean time between false detections,
long-plural =s
}
\DeclareAcronym{APT}{
short= APT,
long = advanced persistent threat,
long-plural =s
}
\DeclareAcronym{MC}{
short = MC,
long = Monte Carlo,
long-plural = s
}

\def\stage{s}

\def\stageRv{S}
\def\stageSet{\mathcal{S}}

\def\iSet{\mathcal{I}}
\def\jSet{\mathcal{J}}
\def\kSet{\mathcal{K}}

\def\E{\mathbb{E}}
\def\P{\mathbb{P}}
\def\alertVector{y}

\def\alertVectorRv{Y}

\def\hypothesis{h}
\def\hypothesisSet{\mathcal{H}}

\def\policy{\rho}
\def\policySet{\mathcal{P}}
\def\dSet{\mathcal{D}}
\def\dRv{D}
\def\powerSet{\mathscrsfs{P}}
\usepackage{comment}

\def\hypothesisRv{H}
\def\tpRate{\delta}
\def\fpRate{\zeta}
\def\tranProb{a}

\def\prior{q}
\def\belief{\pi}
\def\beliefRv{\Pi}
\def\distributedBelief{\phi}

\def\threshold{\theta}

\def\hSet{\mathcal{H}}

\newcommand\myleqA{\stackrel{\mathclap{\normalfont\mbox{(a)}}}{\leq}}
\newcommand\myleqB{\stackrel{\mathclap{\normalfont\mbox{(b)}}}{\leq}}
\usepackage{siunitx}
\usepackage[normalem]{ulem}

\newcommand{\revision}[1]{\textcolor{black}{#1}}
\newcommand{\tcnsrevision}[1]{\textcolor{black}{#1}}
\begin{document}

\title{{\model}: Adaptive Detection and Proactive Threat Defense for Autonomous APT Attacks}


\author{\IEEEauthorblockN{Yeongwoo Kim\IEEEauthorrefmark{1},  Quanyan Zhu\IEEEauthorrefmark{2} and Gy\"orgy D\'an\IEEEauthorrefmark{1}}
\IEEEauthorblockN{\IEEEauthorblockA{\IEEEauthorrefmark{1} Department of Network and Systems Engineering,
KTH Royal Institute of Technology, Stockholm, Sweden\\ \IEEEauthorrefmark{2} Department of Electrical and
Computer Engineering, New York University, New York, USA\\
Email: \IEEEauthorrefmark{1}\{yeongwoo, gyuri\}@kth.se, \IEEEauthorrefmark{2}qz494@nyu.edu}
}
}
\maketitle

\begin{abstract}
Advanced persistent threat (APT) actors increasingly employ sophisticated techniques to propagate laterally through segmented enterprise networks. Timely detection and defense depend on cross-subnetwork coordination, yet maintaining global situational awareness generates substantial communication overhead. To manage this tradeoff, flexible monitoring and adaptable containment are imperative. This paper presents ADAPTD, a communication- and computation-efficient, decision-theoretic framework integrating: (i) compact kill chains for identifying diverse attack vectors, (ii) an immediate blocking mechanism for timely containment, and (iii) a predictive eviction strategy to restore system security. Our experiments validate ADAPTD's effectiveness across diverse threat scenarios. First, our decentralized belief update scheme outperforms state-of-the-art diffusion HMM. Second, ADAPTD substantially reduces false evictions compared to transformer-based detection. Third, under noisy environments, adaptive blocking contains attackers while minimizing unnecessary disruption. Lastly, the ablation study confirms that combining two defensive actions significantly reduces the defender's total cost.

\end{abstract}

\begin{IEEEkeywords}
Advanced persistent threat, hidden Markov model, incident response
\end{IEEEkeywords}
\section{Introduction}
\label{sec:intro}
    
    

An \ac{APT} is a threat actor that employs various techniques and tools to gain illegitimate access to networked systems, often over an extended period of time~\cite{hu2015dynamic}.  \ac{APT} may involve human threat actors, but there have also been cases of self-propagating malware (e.g., Stuxnet autonomously compromised and destroyed nuclear facilities), causing significant financial loss to organizations~\cite{farwell2011stuxnet}. Such autonomous threats have then been reported in critical infrastructures, such as financial institutions~\cite{cimpanu2020chilean}, government agencies~\cite{abrams2020netwalker}, the healthcare industry~\cite{collier2020major}, and the energy sector~\cite{tsvetanov2021effect}.

A key challenge posed by such attacks is their ability to exploit diverse attack vectors ranging from social engineering to USB flash drives~\cite{zimba2017malware}, enabling attacks even against \revision{well protected corporate infrastructures}. Once compromised, an entire networked system can be infected autonomously through lateral movement via connected subnetworks, without the need to communicate with a command-and-control server, i.e., without any form of remote coordination. 
The diversity of attack vectors implies that, from the perspective of the defender, the starting point of an attack is hard to predict. As attacks propagate, defenders have to i) account for all possible initial attack vectors and ii) strategically combine diverse defensive actions to prevent the compromise of critical assets, while maintaining a certain level of business continuity.

\begin{figure}[!t]
\begin{center}
  \includegraphics[width=0.9\linewidth]{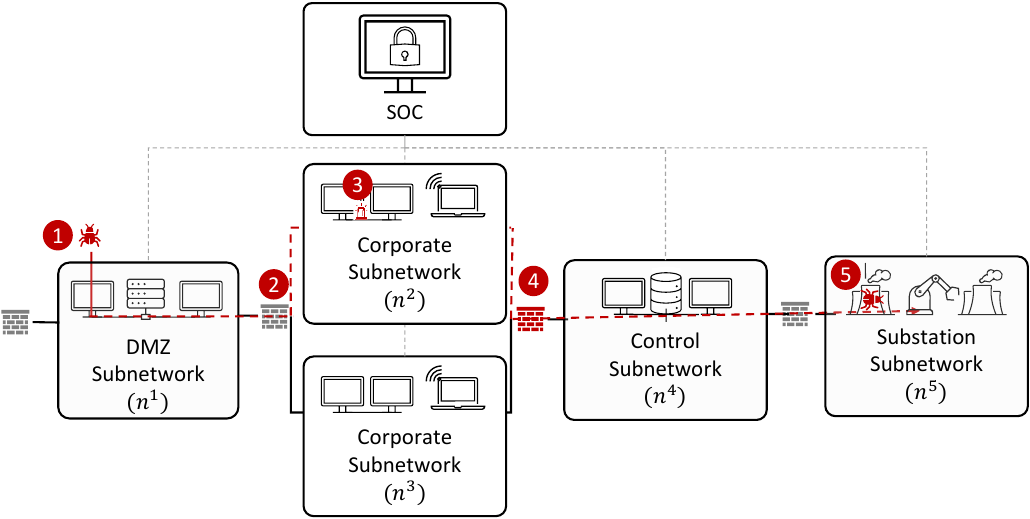}\vspace{-4mm}
  \caption{Illustration of diverse attack vectors of a ransomware that autonomously propagates to a targeted critical asset. In the example, 1. The malware compromises the DMZ subnetwork by using unauthorized USB access. 2. It penetrates the firewalls between DMZ and control subnetworks. 3. The corporate network detects the attack. 4. The detection leads to the blocking of all traffic from DMZ and corporate subnetworks. 5. The attack compromises the substation subnetwork.}
  \label{fig:first_page}
\end{center}
\vspace{-5mm}
\end{figure}

Effective threat response relies on real-time \ac{SA}, which depends on efficient processing of alerts triggered by \acp{IDS}. In centralized approaches, beliefs are computed by aggregating alerts at the machine or host level~\cite{holgado2017real, ourston2003applications}. 
\revision{Concurrently, another line of work utilizing provenance- or behavior-based detection approaches (e.g., HOLMES~\cite{milajerdi2019holmes}) has demonstrated remarkable capabilities in detecting APTs by aggregating system events.}
However, this aggregation introduces substantial communication overhead, \revision{underscoring the need for a bandwidth-efficient belief update mechanism. Designing such a decentralized scheme, however, poses a significant technical challenge, as maintaining global \ac{SA} without incurring prohibitive communication cost among distinct subnetworks is inherently difficult.} 


\ac{SA} is a prerequisite for threat response, but it is not sufficient on its own. Given a belief about the attacker's state, existing \acp{IRS} typically deploy lightweight defensive actions, such as revoking credentials or blocking access, to delay the attacker's progression~\cite{miehling2018pomdp,xiao2018dynamic,hammar2022intrusion}. While such actions are attractive due to their low cost and limited disruption, relying on them exclusively can lead to prolonged engagement, eventual compromise of critical assets, or continuous defensive overhead that impacts business continuity. To address this, effective threat response must combine lightweight actions with heavyweight defenses, such as evicting compromised components. However, integrating these two classes of defenses in a principled and adaptive manner, particularly under uncertainty about the attacker's progression, remains an open \revision{decision-theoretic} problem.

Fig.~\ref{fig:first_page} shows an attack scenario in which an autonomous APT compromises the demilitarized zone (DMZ), and subsequently it compromises the substation subnetwork. The defender could have protected the substation subnetwork from the attacker if it had blocked outgoing traffic from the corporate subnetwork in a timely manner. Nonetheless, to do so with minimal disruption to business continuity, the defender needs \ac{SA}, i.e., an accurate belief about the attack's progression, as well as a \revision{cost-aware decision rule} for when to isolate subnetworks and when to shut down parts of the system for evicting the attacker.

In this paper, we propose \revision{a decision-theoretic defense framework with decentralized belief update}, called {\model}, that i) decomposes large-scale kill chains into smaller, subnetwork-specific chains; ii) blocks lateral movement for attack mitigation and for more accurate decision-making; and iii) estimates the future cost of mitigation actions for deciding when to evict the attacker from the system.
\revision{Unlike provenance- or behavior-based detection systems (e.g., HOLMES~\cite{milajerdi2019holmes}) that aim to generate high-level alerts using low-level system events, {\model} operates as a decision-theoretic engine that ingests these high-level alerts and optimizes proactive defense under system-wide resource constraints. Consequently, the primary objective of our work is to establish a rigorous framework for defensive decision-making under uncertainty, rather than refining passive detection algorithms. Given this decision-theoretic focus on the systemic evaluation of dynamic defense policies, comprehensive numerical simulations serve as a rigorous environment to mathematically validate our framework across diverse attacker behaviors and system sizes.}

The contributions of our work are as follows:  
\begin{itemize}
\item \textbf{Decentralized situational awareness}: We propose to decompose the attacker's kill chain into subnetwork-specific kill chains, allowing each subnetwork to maintain its own belief state based on local alerts and share it only with neighboring subnetworks. This decentralized approach reduces communication overhead compared to a centralized approach at comparable estimation accuracy.
\item \textbf{Proactive Defense}: Our proposed framework proactively isolates suspicious subnetworks to contain the propagation of APTs until the attack is confirmed. Outgoing traffic from subnetworks is blocked based on  the belief about the attacker's progression taking into account its impact on business continuity. 
\item \textbf{Minimum Cost Eviction}: \revision{We employ \ac{MC} simulations based on the belief to find the optimal time for evicting the attacker from the system taking into account the expected future cost of blocking actions.}
\item \textbf{\revision{Numerical Evaluation}}: We use \revision{extensive numerical simulations to evaluate the proposed policies against state-of-the-art baselines in segmented environments}. Our results show that ADAPTD outperforms the baselines in terms of defense cost, demonstrates robust performance against diverse attack vectors, and has lower  computational and communication overhead.  
\end{itemize}

\noindent The rest of the paper is organized as follows. We discuss related work in Section \ref{sec:related_work}. We formulate the system model in Section \ref{sec:attack_model} and present the {\model} in Section~\ref{sec:adapti}. Section~\ref{sec:simulation} provides an evaluation of {\model}, and Section~\ref{sec:conclusion} concludes the paper.\vspace{-4mm}

\section{Related Work} 
\label{sec:related_work}
\Acp{IDS} generate alerts to identify the attacker's malicious activities~\cite{porras1992penetration, lee1999data, gu2007bothunter, pei2016hercule, hossain2017sleuth, milajerdi2019holmes, alrubayyi2021novel,kim2022active}. Phillip et al. used signatures of known attacks to generate alerts~\cite{porras1992penetration}, which, however, cannot detect unknown attacks. The authors of~\cite{lee1999data} used a data mining algorithm to learn rules to detect anomalous activities. However, this approach raises numerous false alerts, causing alert fatigue. In~\cite{gu2007bothunter}, the authors correlated communication between internal and external devices to reduce the number of alerts. Pei et al. correlated dispersed malicious activities in log data using community detection~\cite{pei2016hercule}. The authors in~\cite{hossain2017sleuth, milajerdi2019holmes} constructed a graph to correlate suspicious system calls, aiming to detect \ac{APT} attacks. Alrubayyi et al. used a negative-positive-selection algorithm to detect the novel malware attack~\cite{alrubayyi2021novel}. Kim et al. considered the investigation of alerts with dynamic prioritization of alerts~\cite{kim2022active}. 
\revision{While these state-of-the-art tools achieve high detection accuracy, they operate as passive monitors and lack a cost-aware mechanism for active defense. Therefore, they are complementary to our framework; they serve as upstream alert generators, whereas {\model} functions as a top-level decision-theoretic engine that optimizes mitigation actions under uncertainty.}

\Acp{IRS} choose defense actions autonomously upon detecting malicious activity~\cite{kreidl2004feedback, zonouz2013rre, iannucci2016high, iannucci2016probabilistic, miehling2015optimal, miehling2018pomdp,xiao2018dynamic, hammar2022intrusion, shen2026privacy, anwar2022honeypot, peng2025propagation, jia2025novel}. Xiao et al.~\cite{xiao2018dynamic} formulated a Stackelberg game to choose lightweight defensive actions aiming to mitigate malware propagation, but do not consider that malware may use different attack vectors.  Kreidl et al.~\cite{kreidl2004feedback} used a partially observable Markov decision process for maintaining \ac{SA} regarding the security state of the host and to select defense actions, considering the defender's cost~\cite{miehling2018pomdp}. The authors in~\cite{zonouz2013rre} developed a hierarchical approach comprised of local and global engines. The local engine maintains the \ac{SA} on an attack-response chain, while the global engine aggregates the \ac{SA} from local engines and chooses the defense actions. Iannucci et al.~\cite{iannucci2016high, iannucci2016probabilistic} chose a response action to counteract the attacks, considering a response that may mitigate multiple attacks. Miehling et al.~\cite{miehling2015optimal,miehling2018pomdp} used Bayesian attack graphs to choose defensive actions where the action changes the attacker's available exploits, improving SA. 

\tcnsrevision{Considering strategic attackers~\cite{jia2025novel,singh2022dynamic}, Jia et al.~\cite{jia2025novel} formulated a Bayesian–Stackelberg game to model attacker-defender interactions under incomplete information, whereas Singh et al.~\cite{singh2022dynamic} formulated the attacker–defender interaction as a linear quadratic differential game to design defensive strategies under asymmetric information. Similarly}, \revision{the authors in~\cite{hammar2022intrusion, anwar2022honeypot} modeled networked systems or individual devices using Markov models to guide defensive decision-making. The belief state, updated via Bayes' rule, was then used to select appropriate defensive actions (e.g., blocking IP addresses, allocating honeypots).} 
\revision{In a related line of work,\tcnsrevision{~\cite{shen2026privacy, peng2025propagation} employed epidemiological models to determine mitigation or remediation strategies, respectively.} However, these approaches do not consider the combination of lightweight and heavyweight defenses; lightweight defenses can be effective in delaying the attacker's progression, whereas heavyweight defenses (e.g., eviction) are necessary for restoring system integrity.}

Our proposed solution is closely related to the zero trust architecture that considers hosts outside a corporate firewall malicious and requests verification to access a network~\cite{stafford2020zero}. In contrast to this work, our work considers that the attacker is already in the corporate network, and firewalls between subnetworks have default rules to enable regular communications, which leads to more effective blocking of subnetworks.  \vspace{-2mm}

\begin{figure}[!t]
\begin{center}
    \includegraphics[width = 0.8\linewidth]{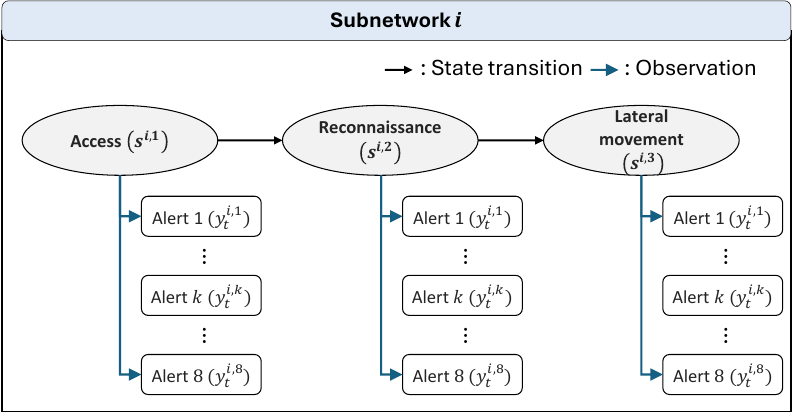}\vspace{-4mm}
      \caption{A subnetwork kill chain for $i \in \{1, \dots, I\}$. The row of gray nodes represents the attack states in subnetwork $i \in \iSet \setminus \iSet^c$. Black arrows denote state transitions, while blue arrows indicate observations conditioned on the state. The attacker repeatedly conducts lateral movements across transient networks in an effort to reach one of the critical subnetworks.}\label{fig:attack_model}
\end{center}
\vspace{-8mm}
\end{figure}

\section{System and Attack Model}
\label{sec:attack_model}

We consider a system that consists of a set $\iSet = \{1,\dots,I\}$ of subnetworks separated by firewalls, in accordance with state-of-the-art practice in industrial control systems~\cite{bicaku2020security}, as illustrated in Fig.~\ref{fig:first_page}. We denote by $\iSet^{\uparrow}(i)$ the set of upstream subnetworks relative to subnetwork $i$ (e.g., $\iSet^{\uparrow}(4)=\{1,2,3\}$), by $\iSet^{\uparrow}_d(i) \subseteq \iSet^{\uparrow}(i)$ the set of directly connected upstream subnetworks relative to subnetwork $i$ (e.g., $\iSet^{\uparrow}_d(4)=\{2,3\})$, by $\iSet^{\downarrow}(i)$ the set of downstream subnetworks relative to subnetwork $i$ (e.g., $\iSet^{\downarrow}(3)=\{4,5\}$), and by $\iSet^{\downarrow}_d(i) \subseteq \iSet^{\downarrow}(i)$ the set of directly connected downstream subnetworks from subnetwork $i$ (e.g., $\iSet(3)=\{4\}$). We denote the set of critical subnetworks by $\iSet^c \subset \iSet$;  subnetworks $i\in \iSet^c$ contain critical assets, making it imperative to secure these subnetworks.
We make the reasonable assumption that the firewalls between neighboring subnetworks filter traffic based on rules that can be updated in real-time. 
%


\begin{figure*}[!t]
\begin{center}
  \includegraphics[width=0.8\linewidth]{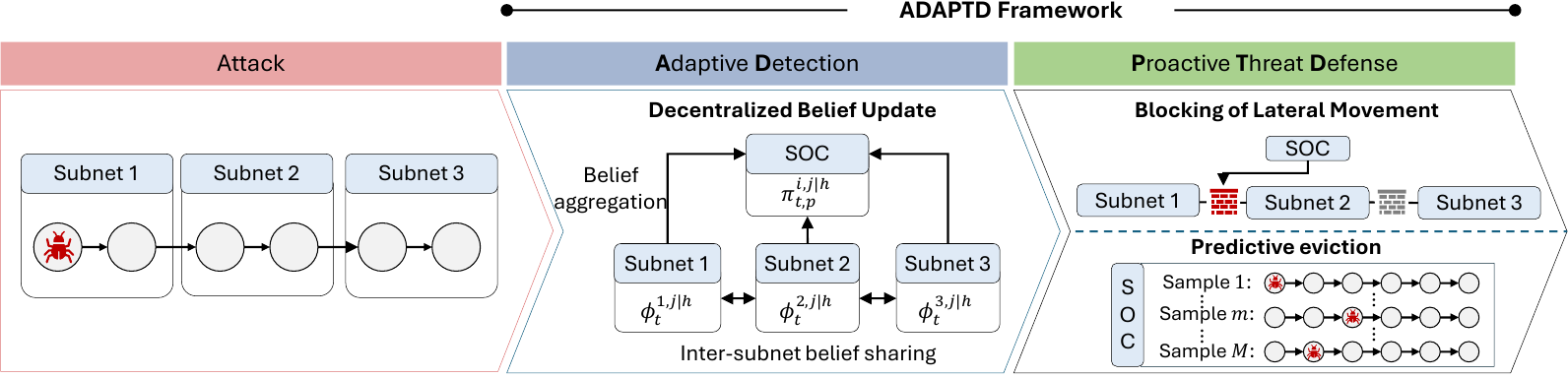}\vspace{-4mm}
  \caption{Illustration of the {\model} framework. The attacker initiates the intrusion from subnetwork 1. As the attack propagates, each subnetwork independently maintains its partitioned belief to track the attacker's progression, while the SOC aggregates these beliefs. Based on the aggregated belief, threat blocking contains the attacker, and predictive eviction removes the attacker from the system when the expected cost of continued blocking exceeds the predefined eviction cost.}
  \label{fig:framework}
\end{center}
\vspace{-8mm}
\end{figure*}

\vspace{-2mm}
\subsection{Attack model}\vspace{-2mm}

\subsubsection{Attack states} We model the attacker's progression within a subnetwork by transitions between attack states, where each attack state corresponds to a tactic in the MITRE ATT\&CK framework~\cite{mitre_attack}. 
Let us denote the set of attack state indices by $\jSet = \{1, \dots, j, \dots, J\}$ and the set of attack states in subnetwork $i$ by $\stageSet^i = \{\stage^{i,1}, \dots, \stage^{i,j}, \dots, \stage^{i,J}\}$. We adopt the convention that $\stage^{i,1}$ corresponds to access to subnetwork $i$ and $\stage^{i,J}$ corresponds to lateral movement from subnetwork $i$ to a downstream subnetwork. Since all subnetworks are initially clean, we denote this completely clean state by $\stage^{0,0}$. The overall state space is therefore $\stageSet=\cup_{i\in\mathcal{I}} \stageSet^i \cup \{\stage^{0,0}\}$. The objective of the attacker is to compromise assets within one of the critical subnetworks, which can be done by reaching state $s^{i,1}$ for some  $i \in \iSet^c$. 


Fig.~\ref{fig:attack_model} illustrates three attack states in a kill chain (i.e., $J=3$): initial access $\stage^{i,1}$, reconnaissance $\stage^{i,2}$, and lateral movement $\stage^{i,3}$.
The initial access state $\stage^{i,1}$ involves the use of various exploits, such as exploiting external remote services and public-facing applications, compromised removable media, and abused valid accounts.
The reconnaissance state $\stage^{i,2}$ includes malicious activities such as vulnerability and IP scanning, aiming at identifying vulnerable hosts in neighboring subnetworks. Finally, the lateral movement state $\stage^{i,3}$ uses tactics such as exploiting remote services, abusing the and hijacking remote sessions to compromise neighboring subnetworks. Notably, techniques such as remote‑service exploitation and the abuse of valid accounts can be used in both initial access and lateral movement. To compare these two phases, we analyze the techniques and their associated procedure examples in the ATT\&CK Enterprise matrix~\cite{mitre_attack}. The analysis shows that 58\% of the procedure examples associated with lateral movement also appear among the procedure examples for initial access. At the technique level, these overlapping procedures correspond to 89\% of all lateral‑movement techniques. This substantial overlap motivates our modeling choice: access to a downstream subnetwork $i^\prime \in \iSet^{\downarrow}_d(i)$ following lateral movement in subnetwork $i$ is modeled as an instance of initial access.
Given that the attacker's goal is to compromise a critical subnetwork, in  Fig.~\ref{fig:attack_model}, we omit minor attack steps unrelated to the lateral movement. This is consistent with the notion of a high-level kill chain~\cite{holgado2017real}, which demonstrated that some attack steps can be excluded while detecting attacks successfully.

\subsubsection{Attacker progression}
Time is considered to be slotted and is represented by $t \in \mathbb{Z}^{+}$, where $\mathbb{Z}^{+}$ denotes the set of positive integers. We denote the attack state at time $t$ by $\stageRv_t  \in \stageSet$, and we consider that $\stageRv_0=\stage^{0,0}$, i.e., all subnetworks are initially not compromised. The attacker then selects a subnetwork and initiates an attack. Let us denote by $\prior^i$ the probability that the attacker selects subnetwork $i$ and by $\P(\stageRv_{t+1} = \stage^{i,1} \mid \stageRv_{t} = \stage^{0,0})=1-\P(\stageRv_{t+1} = \stage^{0,0} \mid \stageRv_{t} = \stage^{0,0})$ that the probability of initiating attack at time $t$. We make the reasonable assumption that the initial compromise does not happen in a critical subnetwork, i.e., $\sum_{i \in \iSet \setminus \iSet^c} \prior^i = 1$, that is, $\prior^{i^{\prime}} = 0$ if $i^\prime \in \iSet^{c}$. This assumption is reasonable  considering the stringent security controls mandated by standards such as IEC62443~\cite{bicaku2020security}. 

We denote the probability that the attacker progresses from state $s^{i,j}$ to state $s^{i^\prime,j^\prime}$ between subsequent time slots by $\tranProb^{i,j \rightarrow i^\prime,j^{\prime}} = \P(\stageRv_{t+1} = \stage^{i^\prime,j^{\prime}} \mid \stageRv_{t} = \stage^{i,j})$. 
As we consider an APT attacker, we make the reasonable assumption that the attacker aims to compromise a critical subnetwork without re‑compromising subnetworks that have already been compromised. Therefore, the attacker tries to transition to a directly connected downstream subnetwork $i^\prime \in \iSet^{\downarrow}_d(i)$ from subnetwork $i$.
Recall that state $\stage^{i,J}$ corresponds to lateral movement, thus a transition from subnetwork $i \in \iSet$ to a direct downstream subnetwork $i^\prime \in \iSet^{\downarrow}_d(i)$ would correspond to a transition to $\stage^{i^\prime,1}$, i.e., initial access on subnetwork $i^\prime$. 
\vspace{-2mm}
\subsection{Defender model}
\label{sec:defender_model}
\subsubsection{Observations and Belief} The defender can observe alerts generated by the \ac{IDS}, but cannot directly observe the attacker's underlying state. Considering that the \ac{IDS} does not trigger true alerts when the attack is not present in subnetwork $i$, the defender extends the kill chain in Fig.~\ref{fig:attack_model} by prepending the clean state $\stage^{i,0}$ and appending the foothold state $\stage^{i,J+1}$, which represent an uncompromised subnetwork and inactive exploits after the attacker has left the subnetwork, respectively. Thus, we denote the defender's states by $\stageSet^\prime =\cup_{i\in\mathcal{I}} \stageSet^i \cup_{i\in \iSet}\{\stage^{i,0}, \stage^{i,J+1}\}$, the set of alert indices in a subnetwork by $\kSet = \{1,\dots,k,\dots,K\}$, and the alerts from subnetwork $i$ observed at time slot $t$ by $\alertVectorRv^{i}_t = [\alertVectorRv^{i,k}_t] \in \{0,1\}^K$, where $\alertVectorRv^{i,k}_t = 1$ indicates an alert, and $\alertVectorRv^{i,k}_t = 0$ indicates no alert. The alerts across all subnetworks at time $t$ are denoted by the matrix $\alertVectorRv^{\mathcal{I}}_t = [\alertVectorRv^{i}_t] \in\{0,1\}^{I \times K}$. Finally, the tensor of alerts available to the defender at time $t$ is denoted by $\alertVectorRv_{1:t} = [\alertVectorRv^{\mathcal{I}}_{t^\prime}] \in\{0,1\}^{t \times I\times K}$  for $t^\prime \in \{1,\dots,t\}$.

We denote by $\tpRate^{i^\prime,k \vert i,j}$ the probability of a true alert $k$ in subnetwork $i^\prime \in \iSet$ given state $\stage^{i,j}$. We denote the probability of a false alert $k$ in subnetwork $i^\prime$ by $\fpRate^{i^\prime,k}$, which is independent of the attack state. The probability of observing  alert $k$ in subnetwork $i^\prime$ given attack state $\stage^{i,j}$ at time $t$ can thus be expressed as\vspace{-2mm}
\begin{align}
\begin{split}
            \P(\alertVectorRv^{i^\prime,k}_t &= \alertVector^{i^\prime,k}_t \vert \stageRv_t = \stage^{i,j})\\ 
            &= \begin{cases}
                 (1-\tpRate^{i^\prime,k \vert i,j})(1-\fpRate^{i^\prime,k}), & \alertVector^{i^\prime,k}_t=0, \\
                  1- (1-\tpRate^{i^\prime,k \vert i,j})(1-\fpRate^{i^\prime,k}), & \alertVector^{i^\prime,k}_t=1.
            \end{cases}
\end{split}
\end{align}
Since the alerts are conditionally independent given the attack state, we can express the probability of observing a particular alert vector for subnetwork $i^{\prime} \in \iSet$ given a state at time $t$ as:
            \begin{align}
            \begin{split}
                 \P(\alertVectorRv^{i^{\prime}}_t &= \alertVector^{i^{\prime}}_t \vert \stageRv_t = \stage^{i,j}) = \prod_{k \in \kSet}\P(\alertVectorRv^{{i^{\prime}},k}_t = \alertVector^{i^{\prime},k}_t \vert \stageRv_t = \stage^{i,j}).
            \end{split}
            \end{align}
Then, the probability of observing alert vector $\alertVector^{\iSet}_t$ in state $\stage^{i,j}$ can be expressed as
        \begin{align}
           \begin{split}
                 \P(\alertVectorRv^{\iSet}_t = \alertVector^{\iSet}_t \vert \stageRv_t &= \stage^{i,j})= \prod_{i^{\prime} \in \iSet}\P(\alertVectorRv^{i^{\prime}}_t
                 = \alertVector^{i^{\prime}}_t \vert \stageRv_t = \stage^{i,j}).
            \end{split}
        \end{align}

%
\subsubsection{Actions} In every time slot, the defender can update the network \acp{ACL} (e.g., firewall rules), which it can use for blocking and unblocking traffic between subnetworks. We use $\dSet_t \in \powerSet(\iSet)$ to denote the set of subnetworks from which the defender blocks traffic towards their downstream subnetworks. That is, for a subnetwork $i$ and its direct downstream subnetwork $i^\prime \in \iSet^{\downarrow}_d(i)$, the transition probability becomes \vspace{-3mm}
\begin{align}
     \tranProb^{i,J \rightarrow i^{\prime},1}_{t} = \begin{cases}
         0,  \qquad & i \in \dSet_{t}, \\
        \tranProb^{i,J \rightarrow i^{\prime},1}, &  i \notin \dSet_{t}, 
    \end{cases} \vspace{-3mm}
\end{align}
i.e., the attacker cannot access subnetwork $i^{\prime}$ through lateral movement from subnetwork $i\in\dSet_t$. To limit the impact on business continuity, we consider that the defender's blocking decision is subject to the constraint $\vert \dSet_t \vert \leq B$ where $B\in \mathbb{N}$ is the blocking budget.

As an alternative, the defender can decide to escalate the incident to the emergency response team for immediate eviction of the attacker. We refer to this as the decision to evict, and we denote the decision at time $t$ by $\tau_t \in \{0,1\}$, where $\tau_t = 0$ means continuing regular operation, and $\tau_t=1$ means eviction of the attacker. Upon eviction, computers in all subnetworks are reinstalled to evict the attacker.

\subsubsection{Belief update}\label{sec:central_belief} Based on the observations and its actions, the defender can maintain a belief $\belief_{t+1}^{i, j}$ about the attacker's progression using the forward algorithm~\cite{fan2001constrained} as\vspace{-2mm}
\begin{align}
   \belief^{i,j}_{t+1} &= \mathcal{F}^{i,j}(\belief_t, \alertVector^{\iSet}_{t+1}, \dSet_{t}),\\
    &=\eta_t\cdot \alpha^{i,j}_{t+1} \cdot \P(\alertVectorRv^{\iSet}_{t+1} = \alertVector^{\iSet}_{t+1} \vert \stageRv_t = \stage^{i,j}),\label{eq:belief}\vspace{-2mm}
\end{align}
where $\mathcal{F}^{i,j}(\belief_t, \alertVector^{\iSet}_{t+1}, \dSet_t)$ is the forward algorithm using the previous belief $\belief_t$, current alerts $\alertVector^{\iSet}_{t+1}$ from all subnetworks, and the previous blocking decision $\dSet_t$. Each term in~\eqref{eq:belief} is\vspace{-1mm}
\begin{align}
    \begin{split}
    \alpha^{i,j}_{t+1} &= \sum_{i^{\prime} \in \iSet, j^\prime\in \jSet} \belief^{i^\prime,j^\prime}_{t} \cdot  \tranProb^{i^{\prime},j^\prime \rightarrow i,j}_{t}, \label{eq:alpha_central}
    \end{split}\\
    \begin{split}
       \eta_t &=\frac{1}{\sum_{i\in\iSet, j\in \jSet}\alpha^{i,j}_{t+1} \P(\alertVectorRv^{\iSet}_{t+1} = \alertVector^{\iSet}_{t+1} \vert \stageRv_t = \stage^{i,j})}.
    \end{split}
\end{align}
It is worth noting that the forward algorithm has to collect alerts from multiple subnetworks to compute belief. This can make it challenging to maintain the belief $\belief_t$ in real-world systems.
\subsubsection{Cost Model}
Our cost model captures four aspects: the cost of a compromise, the cost of network reconfiguration, its impact on business continuity, and the cost of false eviction. 
We denote the cost of the compromise of subnetwork $i$ by $c^i_c$. 
We consider that the act of reconfiguring the \ac{ACL} incurs cost $c_d$ per subnetwork. Furthermore, we model business continuity in terms of the value $r^{i,i^\prime}$ of connectivity between pairs of subnetworks $i$ and $i^\prime$, hence blocking traffic will result in corresponding loss of revenue.   
Finally, we denote the cost of false eviction by $c_\tau$, representing the cost of unnecessary reinstallation of system components even though they were not compromised.
\vspace{-3mm}
\subsection{Problem Formulation}
We consider that the defender is a rational entity. We can express the perceived cost of the defender at time $t$ as \vspace{-5mm}

\begin{align}\label{eq:perceivedCost}
\begin{split}
c(\beliefRv_t, \dSet^\policy_t, \tau_t)&=  \hspace{-2mm}\sum_{i \in \iSet}\sum_{i^\prime \in \iSet^{\downarrow}_d(i)} c^{i^\prime}_c   \beliefRv^{i,J}_t \tranProb^{i,J\rightarrow i^\prime,1} + \hspace{-1mm}   c_d \vert \dSet^{\policy}_{t} \setminus \dSet^\policy_{t-1} \vert \hspace{-1mm}  \\ 
&\quad -  \hspace{-3mm}\sum_{i\in \iSet \setminus \dSet^\policy_t}  \sum_{i^\prime \in \iSet^{\downarrow}_d(i)}\hspace{-1mm}   r^{i,i^\prime}+ \tau_t c_\tau 
(1\hspace{-1mm} -\hspace{-1mm}\beliefRv^{0,0}_t)
 , 
\end{split}
\end{align}
where $\policy$ is a policy for blocking and eviction. In~\eqref{eq:perceivedCost}, the first term is the risk of the attacker compromising another subnetwork at time $t+1$. The second term is the cost of reconfiguring the firewall. The third term is the value of connectivity between non-blocked subnetworks. 
Finally, the fourth term is the cost incurred by falsely evicting the attacker.


The defender's objective is to minimize the cost of a networked system, subject to a constraint on the blocking budget $B$. We denote by $\gamma \in [0,1]$ the discount factor, by $\policy_{\dSet} \in \policySet_{\dSet}$ the policy for blocking subnetworks, and by $\policy_{\tau} \in \policySet_{\tau}$ the policy for eviction. Then, we express the defender's objective as \vspace{-5mm}
\begin{align}
   \min_{\policy_{\dSet} \in \policySet_{\dSet}, \policy_{\tau} \in \policySet_{\tau}} \hspace{-1mm} &\E \left\{  \sum_{t=1}^{\infty} \gamma^{t} c(\beliefRv_t, \dSet^{\policy_{\dSet}}_t, \tau_t) \mathbbm{1}_{\sum_{t^\prime=1}^{t-1}\tau_{t^\prime}=0} \vert \beliefRv_0 =\belief_0 \hspace{-0.5mm}\right\} \hspace{-0.5mm}, \label{eq:final_objective}
\\
   s.t \quad & \beliefRv^{i,j}_{t+1} = \mathcal{F}^{i,j}(\beliefRv_t,\alertVectorRv^{\iSet}_{t},\dSet_t),\\
    \quad &\dSet^{\policy_{\dSet}}_t = \policy_{\dSet}(\beliefRv_t),\\
    \quad &\tau_t = \policy_{\tau}(\beliefRv_t),\\
    \quad & \vert \dSet^{\policy_{\dSet}}_t \vert \leq B,
\end{align}
where $\mathbbm{1}$ is an indicator function. The defender's problem is thus a \ac{POMDP} with an infinite horizon, whose length depends on the defender's policy.

A straightforward approach to solving the above problem would be to train a defender policy using deep reinforcement learning (RL) ~\cite{kaelbling1996reinforcement}. However, this approach would not provide explainability~\cite{bekkemoen2024explainable} and would require extensive training of $\vert \iSet \setminus \iSet^c \vert$ RL models due to the random initial compromise.
Instead, our objective is to develop a solution that (i) minimizes the defender's cost, and at the same time (ii) is  explainable and (iii) has low communication overhead. In what follows we present \emph{{\model}}, a framework and algorithms for achieving the above design objectives.

\begin{figure}[!t]
\begin{center}
    \includegraphics[width = 0.8\linewidth]{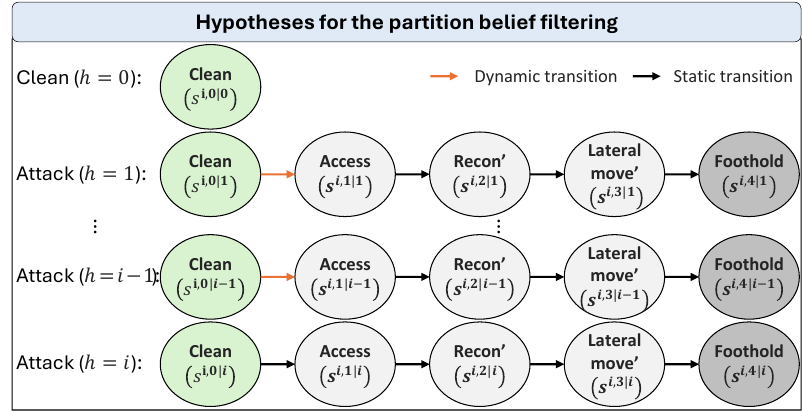}\vspace{-4mm}
    \caption{Hypotheses in subnetwork $i$ for the {\chain} HMM. Each oval represents a state, and each subnetwork contains a sequence of states. Hypothesis $h$ corresponds to an attacker initiating access from subnetwork $h$ and subsequently reaching subnetwork $i$.}
    \label{fig:chained_hypohtheses}
\end{center}
\vspace{-8mm}
\end{figure}

\section{{\model} Framework}
\label{sec:adapti}


Fig.~\ref{fig:framework} illustrates the interaction between the attacker and the defender in our proposed framework. The framework is designed to enable agile attack detection and to mitigate lateral movement and the compromise of critical assets. {\model} consists of three core modules:
\begin{itemize}
\item \textbf{Decentralized Belief Update}: This module enables to maintain real-time \ac{SA} at low overhead. The core idea is that each subnetwork maintains a belief about the attacker progression in the subnetwork based on the locally generated alerts and based on the beliefs of neighboring subnetworks. Decentralized belief update allows low communication overhead compared to sharing alerts across all subnetworks (see Sec.~\ref{sec:distributed_belief}).
\item \textbf{Blocking of Lateral Movement}: This module isolates subnetworks to block the attacker's progression. The decision to isolate subnetworks is taken based on the current belief, computing the cost reduction by blocking one-step progression of the attack and greedily selecting subnetworks to block (see Sec.~\ref{sec:threat_blocking}).
\item \textbf{Predictive Eviction}: This module estimates the expected blocking cost via \ac{MC} simulations and determines whether eviction would be beneficial. Eviction is triggered when the expected blocking cost exceeds the termination cost. (see Sec.~\ref{sec:proactive_termination}).
\end{itemize}
In what follows, we present the three modules in detail.
\subsection{Decentralized belief update}
\label{sec:distributed_belief}
The centralized belief update in Sec.~\ref{sec:central_belief} requires aggregating all raw alerts at a single node. Such a scheme is vulnerable to link failures and is constrained by bandwidth and energy resources~\cite{ghasemi2010stochastic, tamjidi2020efficient}. These limitations motivate the use of decentralized belief update, which mitigates both communication and reliability concerns. However, existing decentralized approaches typically rely on consensus among subnetworks, incurring additional communication rounds. To address this issue, we propose {\chain}, which employs dynamic transition probabilities and eliminates the consensus phase to reduce communication overhead.

We capture diverse attack vectors through a set of hypotheses. Let us denote by $\hSet =\{0, \dots, I\}$ the set of hypotheses representing the initiation of attack from subnetwork $h \in \hSet$ where $h=0$ corresponds to the clean scenario, by $\hSet^i = \iSet^{\uparrow}(i) \cup \{i\} \cup \{0\}$ the set of hypotheses that can compromise subnetwork $i$ (e.g., $\hSet^4 = \{0,1,2,3,4\}$), by $\iSet^h =\{h\} \cup \iSet^{\downarrow}(h)$ the set of reachable subnetworks starting from subnetwork $h$ (e.g., $\iSet^2 = \{2,4,5\}$ in Fig.~\ref{fig:first_page}), by $\stage^{i,j\vert h}$ the attack state $\stage^{i,j}$ under hypothesis $h \in \hSet$, $\tranProb^{i,j \rightarrow i^\prime,j^\prime \vert h }_t$ the dynamic transition probability under hypotehsis $h$ at time $t$, and by $\distributedBelief^{i,j\vert h}_{t}$ the partitioned belief for state $s^{i,j \vert h}$ under hypothesis $h \in \hSet$.  

For intuitive understanding of {\chain}, we consider a networked system comprised of two subnetworks where $\vert \iSet^{\downarrow}_d(1) \vert =1$, $i^\prime \in \iSet^{\downarrow}_d(2)$ and $\vert \iSet^{\uparrow}_d(2) \vert =1$.  The probability of the attacker transitioning to the next subnetwork is $\tranProb^{2,0 \rightarrow 2, 1}_t = 0$ (.resp $\tranProb^{2,0 \rightarrow 2, 1}_t = \tranProb^{1,J \rightarrow 2, 1}$) when the attacker's state satisfies $\stageRv_{t-1} \neq \stage^{1,J}$ (.resp $\stageRv_{t-1} = \stage^{1,J}$). However, because the defender cannot observe the attacker's true state, we instead rely on the partitioned belief in the upstream subnetwork,  $\distributedBelief^{1,J\vert h}_{t}$. This motivates modeling the transition probability from the clean state to the access state as a dynamic quantity that depends on the defender's belief about lateral movement in upstream subnetworks.  Each subnetwork therefore shares only its lateral‑movement belief, $\distributedBelief^{i,J \vert h}_{t}$, with its direct downstream subnetworks $i^\prime \in \iSet^{\downarrow}_d(i)$.

Fig.~\ref{fig:chained_hypohtheses} illustrates a clean hypothesis in subnetwork $i$ and $\vert \{i\} \cup \iSet^{\uparrow}(i) \vert$ attack hypotheses. Each kill chain follows the same structure as Fig.~\ref{fig:attack_model}, except that transitions from the clean state $\stage^{i,0 \vert h}$ to the access states $\stage^{i,1\vert h}$ are dynamically modified for hypotheses $h \neq i$. For $h = i$, no dynamic adjustment is required since the attacker initiates its threat directly from subnetwork $h$. 

For $h \neq i$, the attacker can transition into subnetwork $i$ from any of its direct upstream subnetworks $i^\prime \in \iSet^\uparrow_{d}(i) \cap \iSet^h$. Therefore, the dynamic transition probability has to aggregate the contributions from all such upstream subnetworks. For hypothesis $h$ and subnetwork $i$, a subnetwork-level path is a finite sequence \vspace{-3mm}$$p= (p_0, p_1, \dots, p_l, \dots p_L),$$
where $L$ is the length of the sequence. This sequence satisfies \vspace{-3mm}
$$p_0=h, \quad p_L =i,\quad  \tranProb^{p_l,J\rightarrow p_{l+1},1}_{t}>0.$$
The set of all such directed paths is denoted by \vspace{-3mm}
$$\mathcal{P}(h\rightarrow i)= \{p\vert p_0=h, p_{L} =i,\tranProb^{p_{l,J}\rightarrow p_{{l+1},1}}_{t}>0\}.$$
The probability of a path $p$ is \vspace{-2mm}
$$P(p) = \prod_{l=1}^{L-1} \tranProb^{p_{l,J}\rightarrow p_{l+1},1}_{t}.$$
For an upstream subnetwork $i^\prime \in \iSet^\uparrow_{d}(i) \cap \iSet^h$, the transition weight under hypothesis $h$ is defined as\vspace{-2mm}
$$
\omega^{i^\prime \vert i,h} = \frac{\sum_{p \in \mathcal{P}(h\rightarrow i)}\mathbbm{1}_{\{i^\prime \in p\}} P(p)}{\sum_{p\in \mathcal{P}(h\rightarrow i) }P(p)}.
$$
This quantity represents the probability that the attacker reaches subnetwork $i$ via upstream subnetwork $i^\prime$ under the given hypothesis $h$.

For a subnetwork $i$, combining upstream beliefs and
transition weights, the dynamic clean-to-access transition probability for subnetwork $i$ is
\begin{align}
    \tranProb^{i,0\rightarrow i,1 \vert h}_t \hspace{-2mm}=\hspace{-1mm} \begin{cases}\hspace{-0.5mm}
        \tranProb^{0,J \rightarrow i,1}  ,&\hspace{-3mm}\text{for } h=i,\\\hspace{-0.5mm}
        \sum_{i^\prime\in\iSet^h \cap \iSet^{\uparrow}_d(i)}\omega^{i^\prime \vert i, h} \distributedBelief^{i^\prime,J \vert  h}_{t-1} \tranProb^{i^\prime,J \rightarrow i,1},  &\hspace{-3mm}\text{otherwise},
    \end{cases}   
\end{align}
while, for $j^\prime \neq 0$ and $j \neq 1$, $\tranProb^{i,j^\prime\rightarrow i,j \vert h}_t = \tranProb^{i,j^\prime \rightarrow i,j}$. 
Then, we then model the attacker's transition as.
 \begin{align}
    \begin{split}
   \acute{\alpha}^{i,j\vert h}_{t} &=\begin{cases}
\distributedBelief^{i,0 \vert h}_{t-1} \cdot  \tranProb^{i,0 \rightarrow i,0 \vert h}_t, \\\qquad \qquad \qquad \qquad \qquad \qquad \text{ for }j=0,
   \\
       \distributedBelief^{i,0 \vert h}_{t-1} \cdot  \tranProb^{i,0 \rightarrow i,j \vert h}_t + \sum_{j^{\prime} \in \jSet}\left(\distributedBelief^{i,j^{\prime} \vert h}_{t-1} \cdot  \tranProb^{i,j^{\prime} \rightarrow i,j \vert h}_t\right)  , \\\qquad \qquad \qquad \qquad \qquad \qquad \text{otherwise,}
   \end{cases}\label{eq:chain_prior}
    \end{split}
\end{align}
which uses the prior belief and the corresponding transition probabilities. Next, we compute the likelihood using the probability of local alerts,\vspace{-3mm}
 \begin{align}
    \begin{split}
   \alpha^{i,j\vert h}_{t} &=\P(\alertVectorRv^{i}_t = \alertVector^{i}_t \vert \stageRv_t = \stage^{i,j})\acute{\alpha}^{i,j\vert h}_{t}.\label{eq:chain_observation}
    \end{split}
\end{align}
The belief in subnetwork $i$ can be obtained by normalization:\vspace{-3mm}
\begin{align}
    \distributedBelief^{i,j\vert h}_{t} = \eta^{i,h}_{t,p}\alpha^{i,j\vert h}_{t}, \label{eq:chain_normalize}
\end{align}
where $\eta^{i,h}_{t,p} = \frac{1}{\sum_{j\in\jSet^i}\alpha^{i,j\vert h}_{t,p}}$.
Then, we can combine the local beliefs and obtain the aggregated belief as,
\begin{align}
    {\belief}^{i,j\vert h}_{t,p} = {\eta}^h_{t,p}\distributedBelief^{i,j\vert h}_{t} \hspace{-3mm}\prod_{i^\prime \in \iSet^h \cap \iSet^{\uparrow}(i)}\hspace{-3mm}\distributedBelief^{i^\prime,J+1\vert h}_{t}\hspace{-3mm}\prod_{i^\prime \in \iSet^h \setminus \iSet^{\uparrow}(i)}\hspace{-3mm}\distributedBelief^{i^\prime,0\vert h}_{t} ,\label{eq:chain_central}
\end{align}
 where ${\eta}^h_{t,p}$ is a normalization factor for hypothesis $h$.
Now, let us analyze the similarity between the central and the aggregated partitioned beliefs using \ac{KL} divergence. To this end, we introduce two assumptions for the partitioned belief as follows:
\begin{assumption}[Lower-Bounded Belief]\label{assume:belief_lowerbound}
There exists $\epsilon > 0$ such that for all $i \in \iSet$ and $j \in \jSet$,
\[
\distributedBelief^{i,j\vert h}_{t-1,p} \geq \epsilon, \text{ and } \belief^{i,j \vert h}_{t-1} \geq \epsilon.
\]
\end{assumption}
This assumption implies that we use $\epsilon$ in place of $0$. This is a \revision{necessary choice to ensure numerical stability}, as the \ac{KL} divergence involves a logarithm, which is undefined at $0$.

\begin{assumption}[Bounded Transition Probability]\label{assume:Bounded_transition}
The transition probabilities satisfy:
\[
q^{i^\prime,j^\prime \rightarrow i,j} \in [q_{\min}, q_{\max}] \subset (0,1).
\]
\end{assumption}
Under these two assumptions, we can compute the upperbound of the \ac{KL} divergence between central and {\chain} beliefs as,
\begin{proposition}
    The KL divergence satisfies
\begin{align}
    D_{KL}(\belief^{\cdot,\cdot \vert h}_t \parallel \belief^{\cdot,\cdot \vert h }_{t,p})&\leq
    \log \Bigg(\frac{q_{\text{max}}}{\sum_{j^{\prime} \in \jSet^{i \vert h}} \epsilon^{\vert \iSet^h \vert +1}   q_{\text{min}}^{\vert \iSet^h \vert}}\Bigg).
\end{align}
\end{proposition}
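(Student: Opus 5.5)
The plan is to bound the log-ratio $\log(\belief^{i,j\vert h}_t/\belief^{i,j\vert h}_{t,p})$ pointwise, uniformly over states $\stage^{i,j\vert h}$. Since
\[
D_{KL}(\belief^{\cdot,\cdot\vert h}_t\parallel\belief^{\cdot,\cdot\vert h}_{t,p})=\sum_{i,j}\belief^{i,j\vert h}_t\log\frac{\belief^{i,j\vert h}_t}{\belief^{i,j\vert h}_{t,p}}
\]
and the weights $\belief^{i,j\vert h}_t$ sum to one, any uniform upper bound $M$ on the log-ratio immediately gives $D_{KL}\le M$. It therefore suffices to upper bound the numerator $\belief^{i,j\vert h}_t$ and lower bound the denominator $\belief^{i,j\vert h}_{t,p}$ in terms of $\epsilon$, $q_{\min}$ and $q_{\max}$.

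\textbf{Numerator.} Write the central belief via~\eqref{eq:belief} and~\eqref{eq:alpha_central} as a normalized product of a one-step prediction and an alert likelihood. The prediction $\alpha^{i,j}_{t}$ is a convex combination of transition probabilities, because the previous belief sums to one; by Assumption~\ref{assume:Bounded_transition} it is therefore at most $q_{\max}$. I would let the alert likelihoods cancel between the two beliefs. This is legitimate because the partitioned likelihood in~\eqref{eq:chain_observation} uses the local factor $\P(\alertVectorRv^{i}_t=\alertVector^i_t\vert\cdot)$, and the factors of the other subnetworks appear in the product~\eqref{eq:chain_central} through their own local beliefs. After this cancellation, the numerator is controlled by $q_{\max}$.

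\textbf{Denominator.} Expand~\eqref{eq:chain_central}. The aggregated belief is a normalized product of $\vert\iSet^h\vert$ local factors $\distributedBelief^{i',\cdot\vert h}_t$: one for subnetwork $i$, and one for each other reachable subnetwork (its foothold or clean component). Each local factor comes from~\eqref{eq:chain_prior}–\eqref{eq:chain_normalize}, in which the prior term $\acute\alpha$ is a sum of previous local beliefs multiplied by transitions. Assumption~\ref{assume:belief_lowerbound} bounds each previous belief below by $\epsilon$, and Assumption~\ref{assume:Bounded_transition} bounds each transition below by $q_{\min}$. Collecting one factor $\epsilon q_{\min}$ per subnetwork in $\iSet^h$, plus an extra $\epsilon$ from the lower bound on the central prior belief, gives the lower bound
\[
\sum_{j'\in\jSet^{i\vert h}}\epsilon^{\vert\iSet^h\vert+1}q_{\min}^{\vert\iSet^h\vert}.
\]
The sum over $j'$ comes from the predecessor states that feed $\stage^{i,j\vert h}$ in~\eqref{eq:chain_prior}. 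The normalizers $\eta^{i,h}_{t,p}$ and $\eta^h_{t,p}$ are at least one, since they are reciprocals of sums of probabilities bounded by one, so they can only increase the partitioned belief and can be dropped from the lower bound. Taking the log of the ratio of the two bounds then gives the stated expression.

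\textbf{Main obstacle.} The hard step is handling the normalization constants and the alert likelihoods consistently. The central normalizer $\eta_t$ and the partitioned normalizers do not obviously compare, and the likelihood factors do not match exactly between the two schemes: the central update uses the joint likelihood across subnetworks, whereas the partitioned update factorizes it locally. My first attempt would be to bound the central belief crudely by its unnormalized prediction times the likelihood divided by the same likelihood mass, so that the likelihoods match factor by factor with the partitioned product. If that fails, I would fall back on the cruder bound $\belief^{i,j\vert h}_t\le 1$. I would then absorb the remaining discrepancy into the $q_{\max}$ factor, because the central normalizer's sum contains the term for the true predecessor, which is at least $\epsilon q_{\min}$.
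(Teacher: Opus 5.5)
Your outline follows the paper's route. You use a uniform bound on the log-ratio, which is equivalent to the paper's Jensen step followed by a maximum. You bound the numerator by $q_{\max}$ because the central prediction $\alpha^{i,j}_t$ is a convex combination of transition probabilities. You bound the denominator below by one factor $\epsilon q_{\min}$ per local prediction. The gap is exactly the step you flag as the main obstacle, and none of your proposed fixes closes it. After the alert likelihoods cancel,
\[
\frac{\belief^{i,j\vert h}_t}{\belief^{i,j\vert h}_{t,p}}=\frac{\alpha^{i,j}_t}{w^{i,j}}\cdot\frac{\eta_t}{\eta^h_{t,p}\prod_{i'\in\iSet^h}\eta^{i',h}_{t,p}},
\]
where $w^{i,j}$ is the product of one-step partitioned predictions $\acute{\alpha}$: the one for subnetwork $i$ together with the foothold/clean factors of the other subnetworks in $\iSet^h$.

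Your claim that the numerator is controlled by $q_{\max}$ ignores $\eta_t=1/Z_c$. Here $Z_c=\sum_{s}\alpha^{s}_t\,\P(\alertVectorRv^{\iSet}_t=\alertVector^{\iSet}_t\mid\stageRv_t=s)$ is the predictive probability of the observed alert matrix. It can be as small as the least likely alert pattern allows, which is set by the alert rates, not by $\epsilon$, $q_{\min}$, $q_{\max}$. Once you discard the partitioned normalizers because they are at least one (true in isolation, but the wrong move), $\eta_t$ has nothing left to cancel against, and the resulting bound depends on the alerts. Your first attempt is too vague to assess, and as stated it conflicts with having already discarded those normalizers. The fallback $\belief^{i,j\vert h}_t\le 1$ fails in mirror image: it undoes the likelihood cancellation, so your lower bound on $\belief^{i,j\vert h}_{t,p}$ still carries $\P(\alertVectorRv^{\iSet}_t=\alertVector^{\iSet}_t\mid\stageRv_t=\stage^{i,j})$. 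Finally, the extra $\epsilon$ ``from the central prior belief'' has no source in a lower bound on $\belief^{i,j\vert h}_{t,p}$, which never involves $\belief_{t-1}$. Inserting it is a harmless weakening that does nothing about $\eta_t$.

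The missing idea is to keep the normalizers paired. The product $\eta^h_{t,p}\prod_{i'}\eta^{i',h}_{t,p}$ is the reciprocal of the total unnormalized partitioned mass $Z_p=\sum_s w^s\,\P(\alertVectorRv^{\iSet}_t=\alertVector^{\iSet}_t\mid\stageRv_t=s)$. So the second factor equals $Z_p/Z_c$, a ratio of two mixtures of the same likelihoods, and is therefore at most $\max_s w^s/\alpha^s_t$ whatever the alerts are. Assumption~\ref{assume:belief_lowerbound}, applied to the central belief at $t-1$, bounds $\alpha^s_t$ below by $\epsilon$ times the total incoming transition probability of $s$. This is the legitimate place for the extra $\epsilon$. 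You would then still have to check that the remaining transition factors fit within the stated constant.

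Be aware that the paper does not supply this step either. In its inequality (a) it drops $\log\eta_t$ together with the partitioned-normalizer term, asserting that both log-arguments are at most one. That is true for the latter but false for $\eta_t\ge 1$. You have located the real difficulty correctly, but as written your argument does not prove the bound.
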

\begin{proof}
    See appendix.
\end{proof}

\subsection{Threat blocking}
\label{sec:threat_blocking}
The computational cost of predictive blocking based on \ac{MC} simulation is significant~\cite{miehling2015optimal,miehling2018pomdp}, \revision{and poses a fundamental challenge for real-time decision-making in large-scale networks}. Instead, we propose a computationally simple approach to subnetwork blocking, based on sequential hypothesis testing. Since, in our problem, there are multiple hypotheses and corresponding beliefs (in different subnetworks), the challenge lies in identifying the hypothesis and the belief to be used.
The proposed approach selects the subnetwork to be used by comparing their likelihood ratios at time $t$ and then determines the most likely hypothesis within that subnetwork.   For subnetwork $i$, the likelihood ratio is defined as
\begin{align}
        R^{i,h}_{t} &= \frac{\sum^{J+1}_{j=0} {\alpha}^{i,j\vert h}_t}{ {\alpha}^{i,0\vert 0}_t}, \label{eq:likelihood_testing}
    \end{align}
where this term represents the likelihood ratio between the attack hypothesis and the clean hypothesis. We then denote the most likely subnetwork by $$\hat{i}=\argmax_{i \in \iSet}\max_{h\in \hSet}R^{i,h}_t,$$
and the most likely hypothesis within that subnetwork  by $$\hat{h}= \argmax_{h \in \hSet} R^{\hat{i},h}_t.$$ Accordingly, the belief associated with hypothesis $\hat{h}$ is used for threat blocking.

Let us denote by $\dSet^{\leq B} \in \{ \dSet^\prime \vert \dSet^\prime \in \powerSet(\iSet), \vert \dSet^\prime\vert \leq B\}$ the set of subnetworks that can be blocked under the blocking budget $B$, by $\dSet_{t^\prime} \in \dSet^{\leq B}$ the set of subnetworks blocked at time $t^\prime \in \mathbb{N}$, and by $\dSet_{1:t-1} = [\dSet_1, \dots, \dSet_{t^\prime} \dots,  \dSet_{t-1}]$ the history of blocking decisions up to time $t-1$. Once our framework chooses the belief $\belief^{\cdot,\cdot \vert \hat{h}}_t$, the defender evaluates the expected cost at time $t$ without blocking. The per-time-step cost without blocking is given by:
\begin{align}\label{eq:oneTimeCost_no_block}
\begin{split}
c(\beliefRv_t\hspace{-1mm}=\hspace{-1mm}\belief^{\cdot,\cdot \vert \hat{h}}_t,\hspace{-0.5mm} \dRv_{1:t-1}\hspace{-1mm}=\hspace{-1mm}\dSet_{1:t-1})\hspace{-0.5mm}&=\hspace{-1mm} \sum_{i\in \iSet} \hspace{-1mm} \sum_{i^\prime \in \iSet^{\downarrow}_d(i)}\hspace{-1mm} c^{i^\prime}_c   \belief^{i,J \vert \hat{h}}_t \tranProb^{i,J\rightarrow i^\prime,1} \hspace{-1mm}\\&\quad -    r^{i,i^\prime}.
\end{split}
\end{align}
 With blocking $\dSet_t \in \dSet^{\leq B}$, the cost is
\begin{align}\label{eq:oneTimeCost_block}
\begin{split}
c(\beliefRv_t\hspace{-1mm}=\hspace{-1mm}\belief^{\cdot,\cdot \vert \hat{h}}_t, \hspace{-0.5mm}\dRv_{1:t}\hspace{-1mm}&=\hspace{-1mm}\dSet_{1:t})\hspace{-1mm}= \hspace{-2mm}\sum_{i\in \iSet \setminus \dSet_{t}}  \hspace{-0.5mm}\sum_{i^\prime \in \iSet^{\downarrow}_d(i)}\hspace{-1mm} c^{i^\prime}_c   \belief^{i,J\vert \hat{h}}_t \tranProb^{i,J\rightarrow i^\prime,1} \hspace{-1mm}- \hspace{-1mm}   r^{i,i^\prime}\\
&\qquad \qquad  +c_d \vert \dSet_{t} \setminus \dSet_{t-1} \vert. 
\end{split}
\end{align}
We define the cost reduction achieved by blocking subnetworks $\dSet_{t}$ as:
\begin{align}
\begin{split}
 c_{red}(&\beliefRv_t=\belief^{\cdot,\cdot \vert \hat{h}}_t,\dRv_{t}\hspace{-1mm}=\hspace{-1mm}\dSet_{t}) 
 \\&= c(\beliefRv_t=\belief^{\cdot,\cdot \vert \hat{h}}_t, \dRv_{1:t-1}=\dSet_{1:t-1}) \\
  &\qquad\qquad\qquad - c(\beliefRv_t=\belief^{\cdot,\cdot \vert \hat{h}}_t, \dRv_{1:t}=\dSet_{1:t}),
 \\&= \sum_{i \in \dSet_{t}}\hspace{-1mm}\sum_{i^\prime \in \iSet^{\downarrow}_d(i)}\hspace{-2mm}\left( \hspace{-0.5mm}c^{i^\prime}_c   \belief^{i,J\vert \hat{h}}_t \tranProb^{i,J\rightarrow i^\prime,1} \hspace{-0.5mm}-\hspace{-0.5mm}r^{i,i^\prime}\hspace{-0.5mm} \right)\hspace{-1mm}-\hspace{-1mm}c_d \vert \dSet_{t}\hspace{-1mm} \setminus \hspace{-1mm}\dSet_{t-1} \vert.\\
\label{eq:cost_reduction}
\end{split}
\end{align}

The greedy blocking at time $t$ selects an action that maximizes the cost reduction, i.e., $\hat\dSet_{t} = \argmax_{\dSet \in \dSet^{\leq B}} c_{red}(\beliefRv_t=\belief^{\cdot,\cdot \vert \hat{h}}_t, \dRv_{t}=\dSet)$. We make the blocking decision as, 
\begin{align}
    \dSet_{t} = \begin{cases}
        \hat{\dSet}_{t}, \quad \text{if } c_{red}(\beliefRv_{t}=\belief^{\cdot,\cdot \vert \hat{h}}_t, \dRv_{t}=\hat{\dSet}_{t}) > 0,\\ \emptyset, \quad \text{ otherwise.}
    \end{cases}
\end{align}
This implies that previously blocked subnetworks may be restored if the current belief suggests that blocking is no longer beneficial, i.e., when $c_{red}(\beliefRv_{t+t^\prime}=\belief_{t+t^\prime},\dRv_{t+t^\prime}=\dSet_{t+t^\prime}) \leq 0$ for $t^\prime \in \mathbb{N}$. This restoration mechanism is essential, as belief estimates are subject to error due to false alerts, and incorrect blockings must be reversed to maintain operational continuity. While the greedy decision strategy may not yield globally optimal blocking, it substantially reduces computational complexity, \revision{enabling real-time response and ensuring computational tractability in large-scale networks.}

\begin{table*}[!t]
\begin{center}
\caption{Transition probabilities associated with the composite kill chain constructed from Fig.~\ref{fig:first_page} and Fig.~\ref{fig:attack_model}.}\label{tab:transition_mat}
\resizebox{0.7\textwidth}{!}{%
\begin{tabular}{@{}ccccccccccccccccc@{}}
\toprule
\multirow{4}{*}{\textbf{\begin{tabular}[c]{@{}c@{}}Next\\ subnetwork\\ ($i^\prime$)\end{tabular}}} & \multirow{4}{*}{\textbf{\begin{tabular}[c]{@{}c@{}}Next\\ state\\ ($j^\prime$)\end{tabular}}} & \multicolumn{15}{c}{\textbf{Current subnetwork   ($i$)}}                                                                                                                                         \\ \cmidrule(l){3-17} 
                                                                                                   &                                                                                               & \multicolumn{3}{c}{\textbf{1}}       & \multicolumn{3}{c}{\textbf{2}}       & \multicolumn{3}{c}{\textbf{3}}       & \multicolumn{3}{c}{\textbf{4}}       & \multicolumn{3}{c}{\textbf{5}}       \\ \cmidrule(l){3-17} 
                                                                                                   &                                                                                               & \multicolumn{15}{c}{\textbf{Current state ($j$)}}                                                                                                                                                \\ \cmidrule(l){3-17} 
                                                                                                   &                                                                                               & \textbf{1} & \textbf{2} & \textbf{3} & \textbf{1} & \textbf{2} & \textbf{3} & \textbf{1} & \textbf{2} & \textbf{3} & \textbf{1} & \textbf{2} & \textbf{3} & \textbf{1} & \textbf{2} & \textbf{3} \\ \midrule
\multirow{3}{*}{\textbf{1}}                                                                        & \textbf{1}                                                                                    & 0.90       & 0.10       & 0.00       & 0.00       & 0.00       & 0.00       & 0.00       & 0.00       & 0.00       & 0.00       & 0.00       & 0.00       & 0.00       & 0.00       & 0.00       \\
                                                                                                   & \textbf{2}                                                                                    & 0.00       & 0.90       & 0.10       & 0.00       & 0.00       & 0.00       & 0.00       & 0.00       & 0.00       & 0.00       & 0.00       & 0.00       & 0.00       & 0.00       & 0.00       \\
                                                                                                   & \textbf{3}                                                                                    & 0.00       & 0.00       & 0.95       & 0.25       & 0.00       & 0.00       & 0.25       & 0.00       & 0.00       & 0.00       & 0.00       & 0.00       & 0.00       & 0.00       & 0.00       \\
\multirow{3}{*}{\textbf{2}}                                                                        & \textbf{1}                                                                                    & 0.00       & 0.00       & 0.00       & 0.95       & 0.05       & 0.00       & 0.00       & 0.00       & 0.00       & 0.00       & 0.00       & 0.00       & 0.00       & 0.00       & 0.00       \\
                                                                                                   & \textbf{2}                                                                                    & 0.00       & 0.00       & 0.00       & 0.00       & 0.95       & 0.05       & 0.00       & 0.00       & 0.00       & 0.00       & 0.00       & 0.00       & 0.00       & 0.00       & 0.00       \\
                                                                                                   & \textbf{3}                                                                                    & 0.00       & 0.00       & 0.00       & 0.00       & 0.00       & 0.97       & 0.00       & 0.00       & 0.00       & 0.03       & 0.00       & 0.00       & 0.00       & 0.00       & 0.00       \\
\multirow{3}{*}{\textbf{3}}                                                                        & \textbf{1}                                                                                    & 0.00       & 0.00       & 0.00       & 0.00       & 0.00       & 0.00       & 0.95       & 0.05       & 0.00       & 0.00       & 0.00       & 0.00       & 0.00       & 0.00       & 0.00       \\
                                                                                                   & \textbf{2}                                                                                    & 0.00       & 0.00       & 0.00       & 0.00       & 0.00       & 0.00       & 0.00       & 0.95       & 0.05       & 0.00       & 0.00       & 0.00       & 0.00       & 0.00       & 0.00       \\
                                                                                                   & \textbf{3}                                                                                    & 0.00       & 0.00       & 0.00       & 0.00       & 0.00       & 0.00       & 0.00       & 0.00       & 0.97       & 0.03       & 0.00       & 0.00       & 0.00       & 0.00       & 0.00       \\
\multirow{3}{*}{\textbf{4}}                                                                        & \textbf{1}                                                                                    & 0.00       & 0.00       & 0.00       & 0.00       & 0.00       & 0.00       & 0.00       & 0.00       & 0.00       & 0.98       & 0.03       & 0.00       & 0.00       & 0.00       & 0.00       \\
                                                                                                   & \textbf{2}                                                                                    & 0.00       & 0.00       & 0.00       & 0.00       & 0.00       & 0.00       & 0.00       & 0.00       & 0.00       & 0.00       & 0.98       & 0.03       & 0.00       & 0.00       & 0.00       \\
                                                                                                   & \textbf{3}                                                                                    & 0.00       & 0.00       & 0.00       & 0.00       & 0.00       & 0.00       & 0.00       & 0.00       & 0.00       & 0.00       & 0.00       & 0.98       & 0.02       & 0.00       & 0.00       \\
\multirow{3}{*}{\textbf{5}}                                                                        & \textbf{1}                                                                                    & 0.00       & 0.00       & 0.00       & 0.00       & 0.00       & 0.00       & 0.00       & 0.00       & 0.00       & 0.00       & 0.00       & 0.00       & 0.98       & 0.02       & 0.00       \\
                                                                                                   & \textbf{2}                                                                                    & 0.00       & 0.00       & 0.00       & 0.00       & 0.00       & 0.00       & 0.00       & 0.00       & 0.00       & 0.00       & 0.00       & 0.00       & 0.00       & 0.98       & 0.02       \\
                                                                                                   & \textbf{3}                                                                                    & 0.00       & 0.00       & 0.00       & 0.00       & 0.00       & 0.00       & 0.00       & 0.00       & 0.00       & 0.00       & 0.00       & 0.00       & 0.00       & 0.00       & 1.00       \\ \bottomrule
\end{tabular}
}
\end{center}
\vspace{-8mm}
\end{table*}
\subsection{Predictive eviction}\label{sec:proactive_termination}
Deep RL-based approaches~\cite{hammar2022intrusion, kurt2018online} have to be retrained whenever the underlying Markov model is modified, and such training is costly. To keep computational cost manageable, we propose to employ \ac{MC} simulation to estimate the expected cost of blocking and that of eviction. To further reduce the computational cost, we perform  a \ac{MC} simulation only if the maximum likelihood ratio exceeds the MC-trigger threshold $\threshold_{M}$. That is, for $\hat{i}=\argmax_{i \in \iSet}\max_{h\in \hSet}R^{i,h}_t$, we trigger a \ac{MC} simulation if the maximum likelihood ratio satisfies 
\begin{align}
    \max_{h \in \hSet}R^{\hat{i},h}> \threshold_M. \label{eq:4_likelihood_ratio}
\end{align}
If this is the case, then we perform a \ac{MC} simulation as follows. We sample $M$ particles, each represented as a tuple ($h_m, i_m, j_m$) where $h_m$ is the hypothesis, $i_m$ is the subnetwork, and $j_m$ is the attack state of $m^{th}$ particle. 
To sample a hypothesis, we compute the posterior distribution over hypotheses. Since the number of hypotheses in subnetwork $\hat{i}$ may be smaller than $H$, the computation of the posterior distribution in subnetwork $\hat{i}$  requires the prior distribution of hypotheses from the perspective of subnetwork $\hat{i}$. Recall that $q^{h}=\P(\hypothesisRv=h)$ is the prior probability of hypothesis $h$. The prior distribution from the perspective of subnetwork $\hat{i}$ is then obtained as
\begin{align}
    \P^{\hat{i}}(\hypothesisRv=\hypothesis) =\begin{cases}
        1- \sum_{h^\prime \in\hSet^{\hat{i}}\setminus \{0\}}q^{h^\prime} \quad &\text{for } h=0,
        \\
        q^{h^\prime} & \text{for } h^\prime\in \hSet^{\hat{i}}\setminus \{0\},\\
        0&\text{otherwise}.
    \end{cases}
\end{align}
We can then express the posterior distribution as
\begin{align}
    \P^{\hat{i}}(H=h \vert \alertVectorRv^{\hat{i}}_{1:t} &=\alertVector^{\hat{i}}_{1:t}) = \frac{\P^{\hat{i}}(H=h, \alertVectorRv^i_{1:t} =\alertVector^i_{1:t})}{\sum_{h\in\hSet^i}\P^{\hat{i}}(H=h, \alertVectorRv^{\hat{i}}_{1:t} =\alertVector^{\hat{i}}_{1:t})},\\
    &= \frac{\P^{\hat{i}}(\alertVectorRv^{\hat{i}}_{1:t} =\alertVector^{\hat{i}}_{1:t} \vert \hypothesisRv=\hypothesis)\P^{\hat{i}}(\hypothesisRv=\hypothesis)}{\sum_{h\in\hSet^{\hat{i}}}\P^{\hat{i}}( \alertVectorRv^{\hat{i}}_{1:t} =\alertVector^{\hat{i}}_{1:t} \vert \hypothesisRv=\hypothesis) \P^{\hat{i}}(\hypothesisRv=\hypothesis)}.
\end{align}
Once hypothesis $h_m$ is sampled from the posterior distribution (i.e., $h_m \sim \P^{\hat{i}}(H=h \vert \alertVectorRv^{\hat{i}}_{1:t} =\alertVector^{\hat{i}}_{1:t})$), we subsequently sample $i_m$ and $j_m$ from the belief $\belief^{i,j \vert h_m}_t$.

As we perform \ac{MC} simulation under blocking actions, using a large simulation horizon introduces bias. Once an attack hypothesis is selected, the simulation may result in continuous blocking, which artificially increases the expected blocking cost. To avoid this unfair effect, we restrict the simulation horizon to the expected time required for the attacker to reach the next subnetwork from the sampled state $s^{i_m,j_m}$. \revision{This horizon corresponds to the expected hitting time in Markov chains~\cite{norris1998markov}.}
Given that each state in our attack graph has two outgoing transitions (i.e., one to itself and one to the next state) the expected time to reach the next subnetwork is estimated by recursively enumerating the expected time to reach the next state. For a given state $s^{i,j}$, we define the set of directly reachable next states as $\mathcal{S}^{\downarrow}_d(i,j) = \{(i^\prime,j^\prime)\vert \tranProb^{i,j \rightarrow i^\prime,j^\prime}>0, s^{i,j}\neq s^{i^{\prime},j^\prime}\}$. The expected time to reach the next state is then computed as
\begin{align}
    \bar{t}^{i,j} = \frac{1}{\sum_{ (i^{\prime\prime},j^{\prime\prime})\in\mathcal{S}^{\downarrow}(i,j)}\tranProb^{i,j \rightarrow i^{\prime\prime},j^{\prime\prime}}}.
\end{align}
Thus, the expected time to reach the next subnetwork from state $\stage^{i,j}$ can be computed as
\begin{align}
    \bar{T}^{i,j} = \begin{cases}
        \bar{t}^{i,J }, &\text{if } j=J, \\
        \bar{t}^{i,J } + \sum^{J-1}_{j^\prime =j} \bar{t}^{i,j}, &\text{otherwise}.
    \end{cases}
\end{align}
Then, the cost of the $m^{th}$ \ac{MC} simulation is computed as 
\begin{align}
    c_m= \sum^{\bar{T}^{i_m,j_m}}_{t_m=0}\gamma^{t_m}\sum_{i \in \dSet_{t_m}}\sum_{i^\prime \in \iSet^{\downarrow}_d(i)}\left( r^{i,i^\prime} \right)+c_d \vert \dSet_{t_m} \setminus \dSet_{t_m-1} \vert.
\end{align}
We choose eviction if $\frac{1}{M}\sum^M_{m=1}c_m > c_\tau$, as it indicates that immediate eviction is more cost-efficient than continued operation given the current belief.

\subsection{\revision{Model Parameterization and Feasibility}}
{\model} requires an attack graph to model a networked system and probabilities to describe the attacker's behaviors The attack graph can be built based on  known exploits since finding zero-day vulnerabilities is difficult and developing exploits is time-consuming~\cite{ablon2017zero}. For example, one can use MulVAL for creating the attack graph~\cite{ou2005mulval}, and given the attack graph, the transition and observation probabilities can be obtained using automated penetration testing methodologies~\cite{holm2022lore,holgado2017real}.

\section{Numerical Results}
\label{sec:simulation}
In what follows, we evaluate {\model} using simulations on a composite kill chain that integrates Fig.~\ref{fig:first_page} and Fig.~\ref{fig:attack_model}.

\begin{table}[!b]
\vspace{-8mm}
\begin{center}
\caption{True and false alert rates used in the evaluation.}\label{tab:alert_prob}
\resizebox{0.6\columnwidth}{!}{%
\begin{tabular}{@{}ccccccccc@{}}
\toprule
\multicolumn{9}{c}{\textbf{False   alert rate}}                                                                                             \\ \midrule
\textbf{Subnet} & \multicolumn{8}{c}{\textbf{Alert index ($k$)}}                                                                          \\ \cmidrule(l){2-9} 
\textbf{Index ($i$)}  & \textbf{1}                 & \textbf{2} & \textbf{3} & \textbf{4} & \textbf{5} & \textbf{6} & \textbf{7} & \textbf{8} \\ \midrule
\textbf{1}         & 0.6                        & 0.4        & 0.5        & 0.4        & 0.5        & 0.4        & 0.5        & 0.4        \\
\textbf{2}          & 0.5                        & 0.3        & 0.4        & 0.3        & 0.4        & 0.3        & 0.4        & 0.3        \\
\textbf{3}          & 0.5                        & 0.3        & 0.4        & 0.3        & 0.4        & 0.3        & 0.4        & 0.3        \\
\textbf{4}          & 0.4                        & 0.2        & 0.3        & 0.2        & 0.3        & 0.2        & 0.3        & 0.2        \\
\textbf{5}          & 0.3                        & 0.1        & 0.2        & 0.1        & 0.2        & 0.1        & 0.2        & 0.1        \\ \midrule
\multicolumn{9}{c}{\textbf{Subnetwork $i \in \{1,...,4\}$}}                                                                                                \\ \midrule
\multicolumn{9}{c}{\textbf{True alert rate}}                                                                                                \\ \midrule
\textbf{Stage}      & \multicolumn{8}{c}{\textbf{Alert index ($k$)}}                                                                           \\ \cmidrule(l){2-9} 
\textbf{index ($j$)}  & \textbf{1}                 & \textbf{2} & \textbf{3} & \textbf{4} & \textbf{5} & \textbf{6} & \textbf{7} & \textbf{8} \\ \midrule
\textbf{1}          & 0.4                        & 0          & 0.3        & 0.1        & 0          & 0          & 0          & 0          \\
\textbf{2}          & 0                          & 0.3       & 0.2        & 0.2        & 0          & 0.2        & 0.2        & 0          \\
\textbf{3}          & 0                          & 0          & 0          & 0          & 0.3        & 0.3        & 0.3        & 0.4        \\ \midrule
\multicolumn{9}{c}{\textbf{Subnetwork $5$}}                                                                                                \\ \midrule
\multicolumn{9}{c}{\textbf{True alert rate}}                                                                                                \\ \midrule
\textbf{Stage}      & \multicolumn{8}{c}{\textbf{Alert index ($k$)}}                                                                           \\ \cmidrule(l){2-9} 
\textbf{index ($j$)}  & \textbf{1}                 & \textbf{2} & \textbf{3} & \textbf{4} & \textbf{5} & \textbf{6} & \textbf{7} & \textbf{8} \\ \midrule
\textbf{1}          & 0.4                        & 0          & 0.3        & 0.2        & 0          & 0          & 0          & 0          \\
\textbf{2}          & 0                          & 0       & 0        & 0        & 0.6          & 0.5        & 0        & 0          \\
\textbf{3}          & 0                          & 0          & 0          & 0          & 0       & 0        & 0.7        & 0.6        \\ \bottomrule
\end{tabular}}
\end{center}
\end{table}

\subsection{Evaluation Methodology}

\begin{figure*}[ht]
  \begin{minipage}[b]{0.5\linewidth}
  \centering
    \includegraphics[width = 0.75\linewidth]{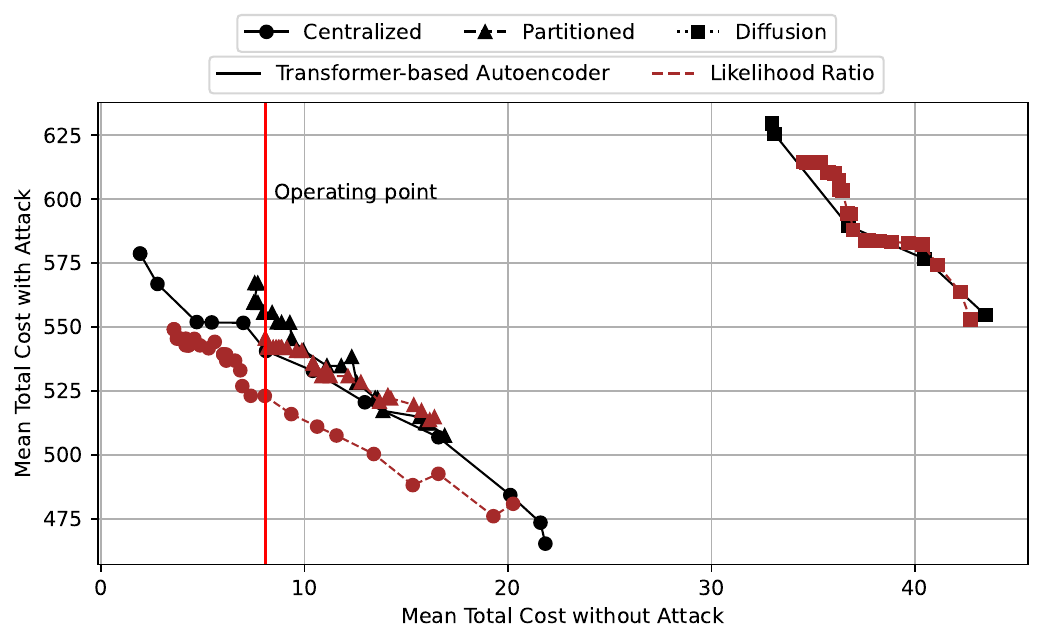}\vspace{-5mm}
    \caption[]
      {Mean total cost with attack vs. mean total cost without attack.}\label{fig:5_operating_point}
  \end{minipage}\hfill
  \begin{minipage}[b]{0.49\linewidth}
  \centering
    \includegraphics[width = 0.75\linewidth]{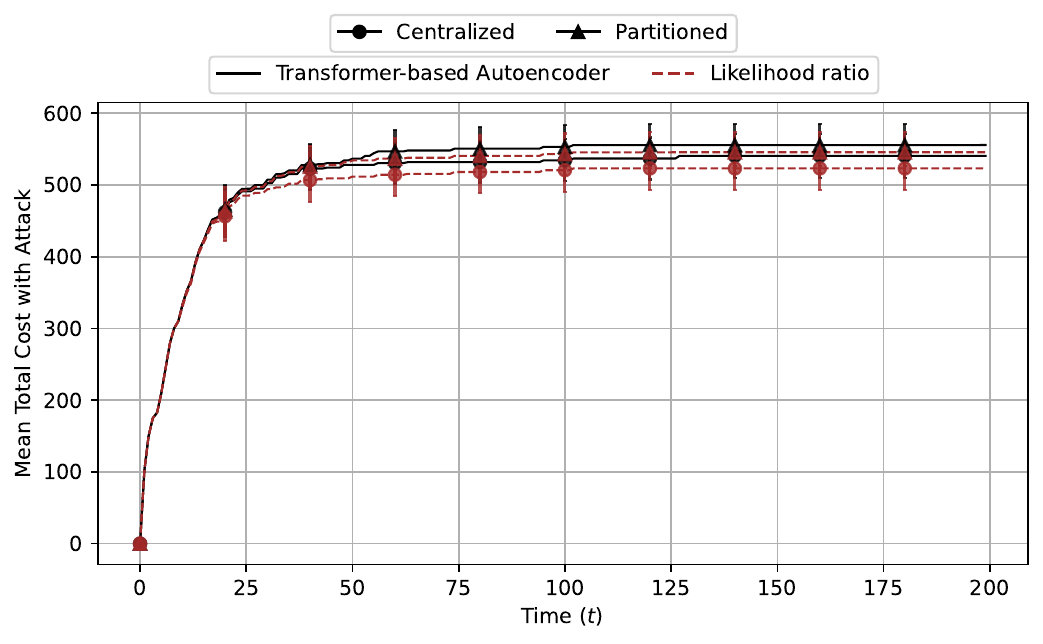}\vspace{-5mm}
    \caption{Mean total cost under attack vs. time including 95\% confidence intervals, for different belief update schemes and \ac{MC} triggering methods.}
\label{fig:5_time_dynamics}
  \end{minipage}
    \vspace{-5mm}
\end{figure*}

For the evaluation, we employ a composite kill chain constructed from Fig.~\ref{fig:first_page} and Fig.~\ref{fig:attack_model}.  The attack originates from any subnetwork $i \in \iSet \setminus \iSet^c$ excluding the subnetworks containing critical assets. We assume a uniform prior distribution over the choice of the initial subnetwork, given by $\prior^i=\frac{1}{\vert \iSet \setminus \iSet^c \vert}$. Once the initial subnetwork is selected, the probability of initiating malicious access is set to $\tranProb^{0,0 \rightarrow i,1}= 0.9$, reflecting a random initiation of the malicious access into the system. Table~\ref{tab:transition_mat} shows the transition probabilities of the full networked system. Once the initial subnetwork is accessed, the attacker proceeds according to the defined transition probabilities.  Table~\ref{tab:alert_prob} presents the true and false alert probabilities for each state $\stage^{i,j}$, with a higher false alert rates assumed for DMZ and corporate subnetworks due to their higher traffic volumes. The discount factor is $\gamma=0.97$. The operational reward for maintaining connectivity between subnetworks $i$ and $i^\prime$ is $r^{i,i^\prime} = \frac{2}{\vert \iSet^{\downarrow}_d(i) \vert }$. The cost of reconfiguring \ac{ACL} is $c_d=1$, and the cost of compromise for the $5$ subnetworks is  $[c^1_c, c^2_c,c^3_c,c^4_c, c^5_c]= [250, 500, 500, 750, 1500]$. Each simulation lasts $1,000$ time slots, with each time slot corresponding to a wall clock time  of five minutes.

For belief update, we use \emph{Partitioned} as a shorthand for {\chain}. As baselines for comparison, we consider two belief-update algorithms: \emph{Centralized} and \emph{Diffusion}~\cite{kayaalp2022hidden}. Using the first approach, the SOC aggregates alerts from all subnetworks and maintains a global belief state. While this method offers comprehensive threat visibility, it incurs substantial communication overhead. We refer to this as \emph{Centralized}. Using the second baseline approach, each subnetwork independently computes its belief from the full transition matrix based on local alerts, and then shares its complete belief with neighboring subnetworks for aggregation (see Diffusion HMM in Appendix). We refer to this as \emph{Diffusion}.

For predictive eviction, we employ the likelihood‑ratio test in~\eqref{eq:4_likelihood_ratio} and we used grid search to find the threshold $\threshold_M$. As a baseline for comparison, we use a transformer‑based autoencoder trained on sequences of benign likelihoods, and trigger a \ac{MC} simulations when the reconstruction error exceeds a threshold (details are provided in the Appendix).
The results shown are the averages of $200$ simulation runs. 

As the main evaluation metric, we use the incurred average total cost, computed akin to~\eqref{eq:perceivedCost}, but based on the actual progression of the attacker. 
As additional evaluation metrics we use the eviction delay, the false-eviction rate, and the empirical frequency of single blocking events. Let us denote by $t^{0,0 \rightarrow i,1}$ the starting time of the attack, i.e., the time when the attacker transitions from a clean state to access state and $t_{e}= \min\{ t \vert \tau_t=1, t\in \mathbb{N}\}$ the time of evicting the attacker. We then define the eviction delay \vspace{-2mm}
\begin{align}
    d_{e} = \max(t_{e} - t^{0,0 \rightarrow i,1},0).
\end{align}
We then define the false‑eviction rate as the fraction of simulation runs that incorrectly choose eviction when the scenario is not under attack. Finally, we define the empirical frequency of single-blocking events \vspace{-2mm}
\begin{align}
f^{single}= \frac{\sum_{t=1}^{T}\mathbbm{1}_{\vert \dSet_t \vert =1} }{\sum_{t=1}^{T}\mathbbm{1}_{\vert \dSet_t \vert>0}},
\end{align}
which captures how often a single blocking action is taken despite a potentially larger blocking budget.

For the computations we used a server with an Intel(R) Xeon(R) CPU E5-2620 v4 @ 2.10GHz and 32G memory. All algorithms were implemented in Python with Numpy and Pandas.

\subsection{Mean cost with attack vs. mean cost without attack}
We start with exploring the relationship between the mean total cost with attack and the mean total cost without attack.

Fig.~\ref{fig:5_operating_point} shows the mean total cost with attack as a function of the mean total cost without attack. The mean total cost with attack decreases as the mean total cost without attack increases, since a higher total cost without attack corresponds to lower \ac{MC}-trigger thresholds that result in higher false-eviction rates and faster true evictions. Regardless of the trigger method employed, \emph{Partitioned} exhibits performance comparable to \emph{Centralized} belief aggregation. Recalling that \emph{Centralized} incurs substantial communication overhead~\cite{ghasemi2010stochastic, tamjidi2020efficient}, we can conclude that \emph{Partitioned} is well suited for belief update in the considered setting. Since \emph{Partitioned} substantially outperforms \emph{Diffusion}, we omit \emph{Diffusion} from further analysis. The figure also shows that the transformer-based autoencoder achieves results similar to the likelihood ratio trigger. We will compare their performance later in more detail.  For the subsequent analysis, we choose an operating point where the mean total cost without attack is 8.07,  i.e., the minimum achievable cost for \emph{Partitioned}.

\vspace{-3mm}
\subsection{Mean total cost with attack vs. time}

Fig.~\ref{fig:5_time_dynamics} shows the mean total cost as a function of time, for  a mean total cost without attack of 8.07. The figure confirms that likelihood ratio‑based triggering of \ac{MC} simulations outperforms the transformer‑based autoencoder, indicating that the likelihood ratio offers an effective, training‑free \ac{MC} triggering mechanism, in contrast to a data‑driven approach that requires extensive training. Comparing the results for the different belief update schemes, we observe that the \emph{Partitioned} scheme achieves a cost comparable to the \emph{Centralized} scheme, i.e.,  similar performance at substantially reduced communication overhead, especially towards the end of the time horizon. At the beginning of an episode, however,  we observe a sharp cost increase due to the $0.1$ transition probability from clean state to initial access, regardless of \ac{MC}-triggering method and the belief update scheme, yielding an $88\%$ chance of transition by $t=20$ (i.e., $1-(0.9)^{20}$). Given the randomness and unpredictability of the initial compromise, the sharp increase is unavoidable. However, after about $t=25$, the performance of the different methods becomes clearly different, underscoring the importance of likelihood-ratio triggering for effective \ac{MC} simulation activation.

\begin{figure}[!t]
\begin{center}
    \includegraphics[width = 0.75\linewidth]{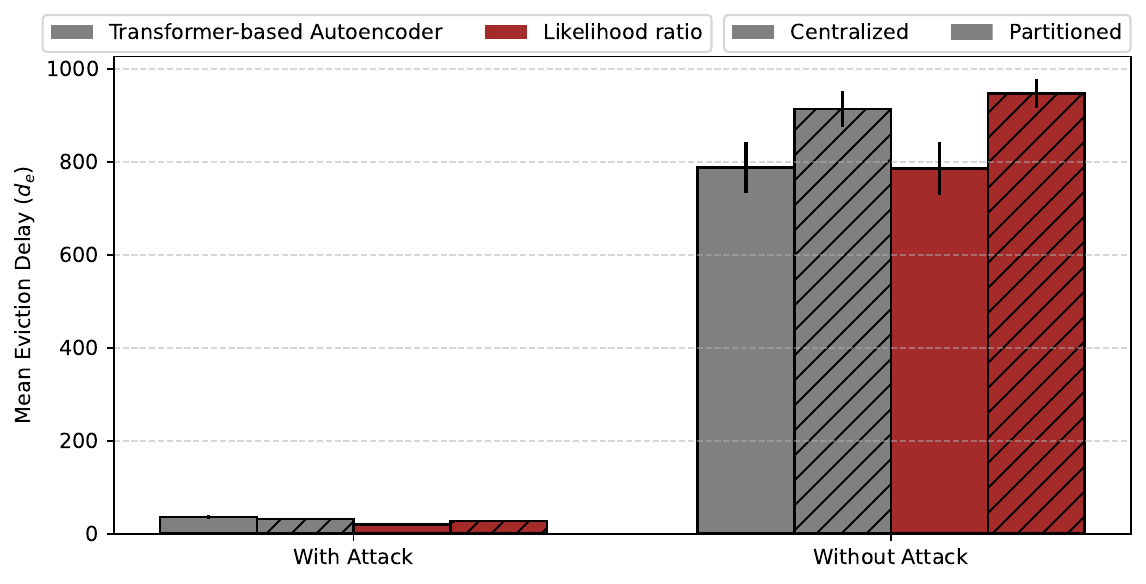}\vspace{-5mm}
    \caption[]
      {Mean eviction delay ($d_{e}$) with 95\% confidence intervals. A comparative analysis between two scenarios (i.e., one with attack and one without attack).}\label{fig:5_mtt}
\end{center}
\vspace{-8mm}
\end{figure}
\vspace{-5mm}
\subsection{Mean time to evict}
Fig.~\ref{fig:5_mtt} shows the mean time to eviction under different belief-update algorithms and triggering methods. We can observe the results are significantly different in scenarios with and without an attack, indicating that our framework responds appropriately to genuine threats while minimizing unnecessary evictions. Without attack, the maximum mean time to eviction reaches $1,000$ time steps, corresponding to the simulation horizon. Comparing belief-update algorithms, \emph{Partitioned} exhibits a mean time to eviction comparable to \emph{Centralized}, further confirming the superior scalability of \emph{Partitioned} without a significant degradation in eviction delay. Notably, \emph{Partitioned} with likelihood ratio demonstrates longer mean eviction delay without attack compared to the transformer-based autoencoder, indicating its suitability as \revision{a robust decision-theoretic engine that effectively minimizes false-eviction cost while maintaining threat responsiveness}. To analyze the differences introduced by the two \ac{MC} triggering methods, we next examine the frequency of MC simulations and the false‑eviction rate.

\begin{table}[b!]\vspace{-5mm}
\begin{center}
\caption{Average number of \ac{MC} simulations with 95\% confidence intervals and the corresponding false‑eviction rate.}\label{tab:5_mc_count_and_false_eviction}
\resizebox{\linewidth}{!}{%
\begin{tabular}{@{}ccccc@{}}
\toprule
                     & \multicolumn{2}{c}{\textbf{Count of MC   simulations}} & \multicolumn{2}{c}{\textbf{False-eviction rate}} \\ \midrule
                     & \textbf{Transformer}    & \textbf{Likelihood ratio}    & \textbf{Transformer}   & \textbf{Likelihood ratio}  \\ \midrule
\textbf{Centralized} & 2.58±1.06               & 0.63±0.22                    & 22.0\%                 & 22.0\%                     \\
\textbf{Partitioned} & 1.39±0.85               & 0.17±0.09                    & 9.0\%                  & 5.5\%                      \\ \bottomrule
\end{tabular}}
\end{center}
\end{table}

Table~\ref{tab:5_mc_count_and_false_eviction} shows the frequency of performing \ac{MC} simulations and the false-eviction rate without attack. 
The figure shows that \emph{Partitioned} with likelihood ratio results in substantially less \ac{MC} simulations compared to \emph{Centralized}, i.e., lower computational burden, as well in a lower false-eviction rate. The reason why \emph{Partitioned} results in less \ac{MC} simulations is that it considers a reduced hypothesis set ($|\hSet^{\hat{i}}| \leq H$ where $\hat{i}=\argmax_{i^\prime \in \iSet}\max_{h\in \hSet}R^{i^\prime,h}_t$), which in turn reduces the probability of falsely triggering an \ac{MC} simulation due to noisy alerts. Overall, the proposed likelihood ratio-based approach demonstrates superior performance compared to the transformer-based approach across configurations, confirming the effectiveness of our proposed framework. 

\vspace{-2mm}
\subsection{Impact of Blocking Budget}
\begin{figure*}[ht]
  \begin{minipage}[b]{0.49\linewidth}
  \centering
    \includegraphics[width = 0.75\linewidth]{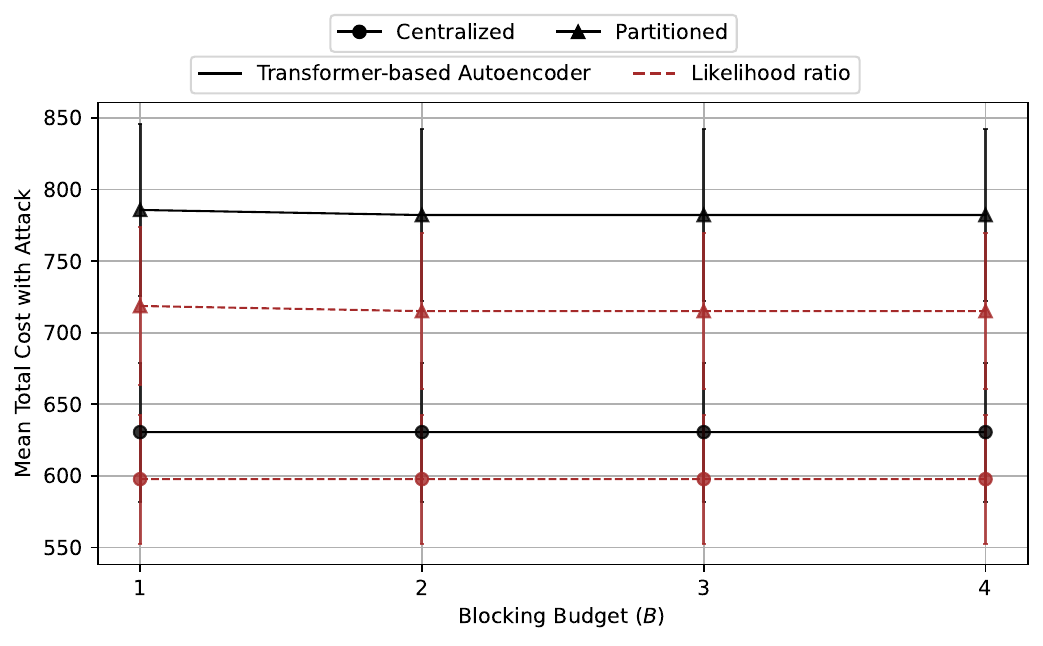}\vspace{-5mm}
    \caption
      {Mean total cost as a function of the blocking budget with 95\% confidence intervals.}\label{fig:5_budget}
  \end{minipage}\hfill
  \begin{minipage}[b]{0.49\linewidth}
  \centering
    \includegraphics[width = 0.75\linewidth]{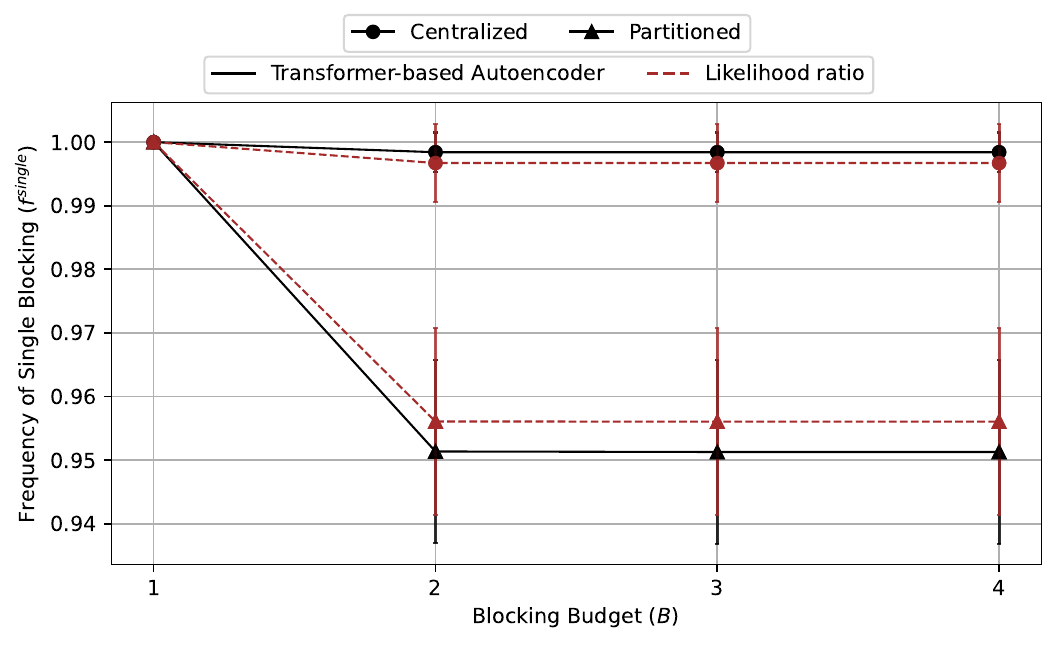}
      \caption{Frequency of single blocking events ($f^{single}$) as a function of the blocking budget, with 95\% confidence intervals.}\label{fig:5_fraction}
  \end{minipage}\vspace{-5mm}
\end{figure*} 
In what follows, we examine the impact of the blocking budget. When the belief is accurate, a single blocking action suffices. To introduce more uncertainty into the belief, we increase the false positive rate by 0.3 and rerun the grid search to compare the resulting total costs under attack with those obtained without attack at the same mean total cost. It is important to note that this scenario is extremely noisy, as all false alert rates become comparable to or even exceed the highest true alert rate.

Fig.~\ref{fig:5_budget} shows the mean total cost with attack as a function of blocking budget. The likelihood ratio consistently achieves lower total costs compared to the transformer-based autoencoder. Considering three aspects (i.e., the lower cost, the inherent explainability of the likelihood ratio, and the transformer's substantial training requirements), it confirms the likelihood ratio as the proper choice for our framework. Interestingly, \emph{Partitioned} exhibits a small cost reduction with increasing budget, while \emph{Centralized} remains largely unchanged; this stems from \emph{Partitioned}'s reliance on a smaller alert set for belief maintenance, which introduces belief inaccuracies absent in the more comprehensive \emph{Centralized} scheme. However, given that the increase in total cost from lateral movement ranges from $500$ to $1,500$, the additional cost incurred when using \emph{Partitioned} with the likelihood ratio is small, indicating that only a small fraction of simulations successfully achieve lateral movement. This highlights that our framework remains robust in defending the networked system even in a noisy environment. To further investigate the small difference in total cost, we examine the frequency of single blocking actions.

Fig.~\ref{fig:5_fraction} shows the frequency of single blocking actions as a function of the blocking budget. The figure shows that single blocking occurs in at least 95\% of cases across all belief-update algorithms and \ac{MC}-triggering methods, demonstrating the greedy algorithm's efficiency. Compared to \emph{Centralized}, \emph{Partitioned} shows a slightly lower single blocking frequency, reflecting belief uncertainty that prompts more extensive blocking; however, this blocking delays attacker progression and refines subsequent beliefs, ultimately making single blocking sufficient. In contrast, \emph{Centralized} maintains a higher single blocking frequency due to its more precise beliefs, albeit at greater communication cost. The cost differences observed in Fig.~\ref{fig:5_budget} primarily arise from single blocking dominance, underscoring the greedy blocking mechanism's flexibility in subnetwork selection based on belief states.

\vspace{-3mm}
\subsection{Ablation Study}
\label{sec:ablation}
Finally, we investigate the importance of the components in our proposed framework. 

Table~\ref{tab:5_ablation} shows the mean total cost with attack for different combinations of components. ADAPTD (blocking + termination) outperforms \emph{Eviction}, indicating the importance of \revision{integrating greedy containment and predictive recovery actions}. With ADAPTD (blocking + termination), \emph{Partitioned} shows performance comparable to \emph{Centralized}, demonstrating the \revision{scalability and stability} of our proposed decentralized solution. 
Without blocking, there is a large performance gap between \emph{Centralized} and \emph{Partitioned}, as blocking particularly benefits the decentralized approach: likelihood ratios computed locally in each subnetwork require complete re-accumulation upon attacker movement to new subnetworks. In contrast, \emph{Centralized} maintains cumulative likelihood ratios irrespective of attacker progression, enabling faster detection but incurring substantial communication overhead.


\begin{table}[b!]
\begin{center}\vspace{-5mm}
\caption{Ablation study. Mean cost with attack with 95\% confidence interval.}\label{tab:5_ablation}
\resizebox{0.8\linewidth}{!}{%
\begin{tabular}{@{}cccc@{}}
\toprule
\multicolumn{4}{c}{\textbf{Total cost with Attack}}                                                                                                                                          \\ \midrule
\textbf{Method}                                                                                        & \textbf{Belief algorithm} & \textbf{Eviction} & \textbf{Blocking + Eviction} \\ \midrule
\multirow{2}{*}{\textbf{\begin{tabular}[c]{@{}c@{}}Transformer-based\\      Autoencoder\end{tabular}}} & \textbf{Centralized} & 560.60±31.30         & 540.66±30.73                    \\
                                                                                                       & \textbf{Partitioned} & 628.75±36.26         & 555.69±29.84                    \\ \midrule
\multirow{2}{*}{\textbf{\begin{tabular}[c]{@{}c@{}}Likelihood\\      Ratio\end{tabular}}}              & \textbf{Centralized} & 534.80±32.20         & 523.12±29.74                    \\
                                                                                                       & \textbf{Partitioned} & 596.40±34.98         & 545.60±28.66                    \\ \bottomrule
\end{tabular}}
\end{center}
\end{table}

\vspace{-3mm}
\section{Conclusion}
\label{sec:conclusion}
We addressed the challenge of threat response against autonomous APTs by formulating the problem as a \ac{POMDP}. The key innovations of our framework include: i) the flexible detection of threats, ii) the immediate blocking for containment, and iii) proactive threat defense to prevent the attacker from reaching critical assets. We introduced a method to maintain a belief about the attacker's progression in a distributed manner, facilitating timely blocking and proactive threat response. Our numerical results demonstrate that the proposed approach significantly reduces the total cost while efficiently managing communication overhead. 
\revision{While the central focus of this work is on the decision-theoretic framework and the distributed belief update rather than engineering a practical detection tool, our evaluation is grounded in  numerical simulations conforming to the MITRE ATT\&CK matrix. Validating the framework against large-scale operational traces and heterogeneous real-world testbeds remains an important direction for future work. In addition to a real world deployment, there are several promising directions for theoretical and structural future research.}
First, one could explore malware that evolves by interacting with compromised hosts. Second, the framework could be extended to account for multiple simultaneous actions by the attacker, where each compromised machine can independently target other subnetworks. Third, incorporating strategic attackers that dynamically adapt their strategy based on the network environment and its belief about the attacker´s progression.
\vspace{-5mm}
\bibliography{biblio.bib}
\vspace{-9mm}
\begin{IEEEbiography}[{\includegraphics[width=1in,height=1.25in,clip,keepaspectratio]{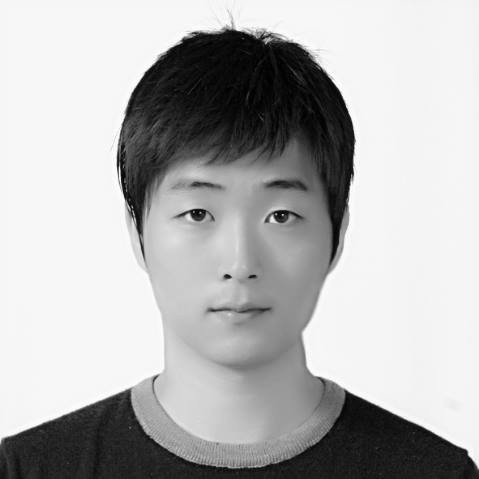}}]%
{Yeongwoo Kim} received his M.Sc. degree in Electrical Engineering from both KTH Royal Institute of Technology, Sweden, and the Technical University of Berlin, Germany, in 2020. He is currently pursuing a Ph.D. in the Division of Network and Systems Engineering at KTH Royal Institute of Technology in Stockholm, Sweden. His research interests icnludes human-in-the-loop systems, machine learning algorithms, and analytic approaches for cybersecurity.
\end{IEEEbiography}\vspace{-9mm}
\begin{IEEEbiography}[{\includegraphics[width=1in,height=1.25in,clip,keepaspectratio]{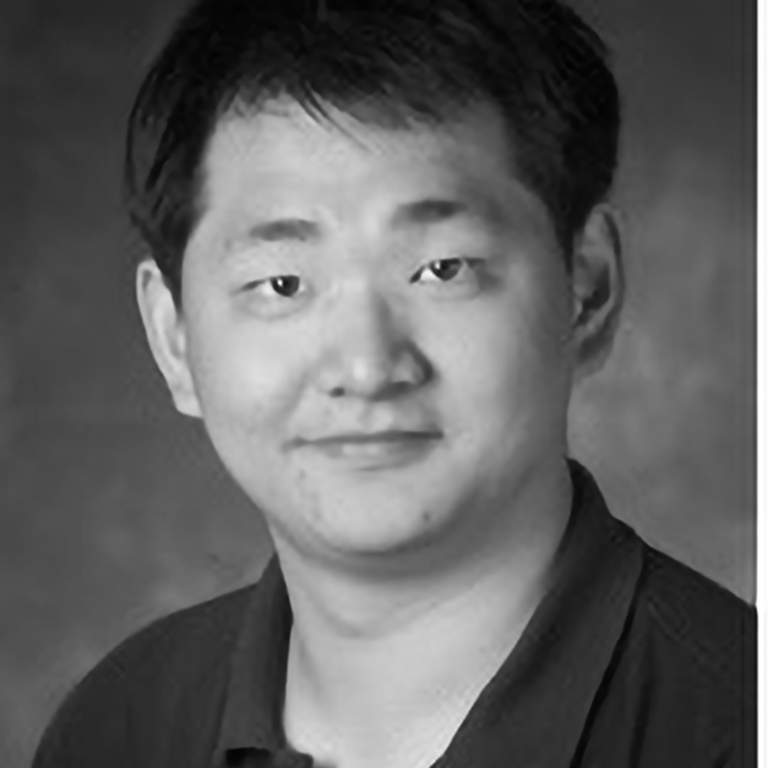}}]%
{Quanyan Zhu} Quanyan Zhu (Senior Member, IEEE) received the B.Eng. degree (Hons.) in electrical engineering from McGill University in 2006, the M.A.Sc. degree from the University of Toronto in 2008, and the Ph.D. degree from the University of Illinois at Urbana-Champaign (UIUC) in 2013. After stints at Princeton University, he is currently an Associate Professor with the Department of Electrical and Computer Engineering, New York University (NYU), where he is an Affiliated Faculty Member with the Center for Urban Science and Progress (CUSP). He is the coauthor of two recent books published by Springer: \textit{Cyber-Security in Critical Infrastructures: A Game-Theoretic Approach} (with S. Rass, S. Schauer, and S. König) and \textit{A Game- and Decision-Theoretic Approach to Resilient Interdependent Network Analysis and Design} (with J. Chen). His current research interests include game theory, machine learning, cyber deception, network optimization and control, smart cities, the Internet of Things, and cyber-physical systems. He was a recipient of many awards, including the NSF CAREER Award, the NYU Goddard Junior Faculty Fellowship, the NSERC Postdoctoral Fellowship (PDF), the NSERC Canada Graduate Scholarship (CGS), and the Mavis Future Faculty Fellowships. He spearheaded and chaired the INFOCOM Workshop on Communications and Control on Smart Energy Systems (CCSES), the Midwest Workshop on Control and Game Theory (WCGT), and the ICRA workshop on Security and Privacy of Robotics. He served as the General Chair or the TPC Chair for the Seventh and the 11th Conference on Decision and Game Theory for Security (GameSec) in 2016 and 2020, the Ninth International Conference on NETwork Games, COntrol and OPtimization (NETGCOOP) in 2018, the Fifth International Conference on Artificial Intelligence and Security (ICAIS 2019) in 2019, and the 2020 IEEE Workshop on Information Forensics and Security (WIFS). He also spearheaded the IEEE Control System Society (CSS) Technical Committee on Security, Privacy, and Resilience, in 2020.
\end{IEEEbiography}\vspace{-9mm}
\begin{IEEEbiography}[{\includegraphics[width=1in,height=1.25in,clip,keepaspectratio]{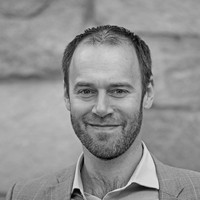}}]%
{Gy\"orgy D\'an} (Senior Member, IEEE) received the M.Sc. degree in computer engineering from the Budapest University of Technology and Economics, Hungary, in 1999, the M.Sc. degree in business administration from the Corvinus University of Budapest, Hungary, in 2003, and the Ph.D. degree in telecommunications from KTH in 2006. He was a Consultant in the field of access networks, streaming media, and videoconferencing from 1999 to 2001. He was a Visiting Researcher at the Swedish Institute of Computer Science in 2008, a Fulbright Research Scholar at the University of Illinois at Urbana–Champaign from 2012 to 2013, and an Invited Professor at EPFL in from 2014 to 2015. He is a Professor with the KTH Royal Institute of Technology, Stockholm, Sweden. His research interests include the design and analysis of content management and computing systems, game theoretical models of networked systems, and cyber-physical system security and resilience. He was area editor of Computer Communications from 2014 to 2021 and of IEEE Trans. on Mobile Computing 2019-2023.
\end{IEEEbiography}

\begin{appendices}
\section*{Appendix: Belief estimation based on diffusion HMM}\label{sec:diffusion}
\begin{figure}[!t]
\begin{center}
    \includegraphics[width = \linewidth]{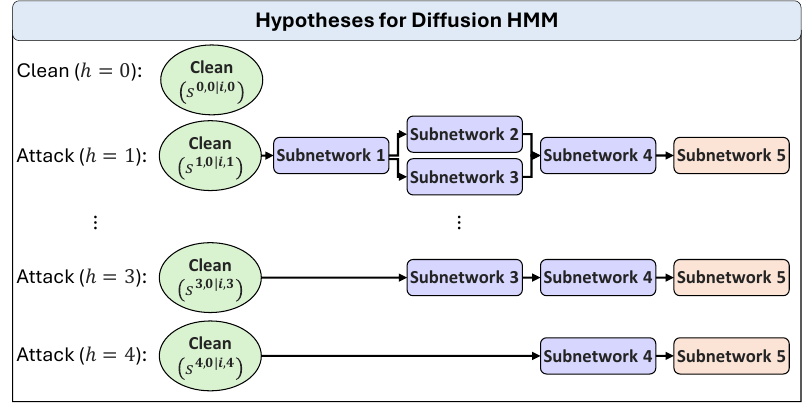}
      \caption{Hypotheses in subnetwork $i$ for the diffusion HMM. Each oval and rectangle represents a state and a subnetwork, respectively. Each subnetwork contains three states: access, reconnaissance, and lateral movement. Attack hypothesis $h$ corresponds to an attacker initiating access from subnetwork $h$.}\label{fig:diffusion_hypotheses}
\end{center}
\end{figure}

We adopt the distributed belief update approach from~\cite{kayaalp2022hidden}. We modify~\cite{kayaalp2022hidden} by allowing a) that each subnetwork maintains up to  $\vert \iSet \setminus\iSet^c \vert+1$ hypotheses, while the diffusion HMM uses a single hypothesis, and b) that each subnetwork maintains likelihoods while the diffusion HMM maintains belief. 

Fig.~\ref{fig:diffusion_hypotheses} shows $\vert \iSet\setminus\iSet^c \vert+1$ hypotheses considered in subnetwork $i$ to model the random initial compromise of an attack, along with a clean hypothesis representing a fully clean network. Let us denote by $\hypothesisSet = \{0\}\cup \iSet\setminus\iSet^c$ the set of hypotheses, where $\hypothesis \in \hypothesisSet$, by $\stage^{i^\prime,j^\prime \vert i,h}_t$ the state $s^{i^\prime,j^\prime}$ in the kill chain of subnetwork $i$ under hypothesis $h$, by $\alpha^{i^\prime,j^\prime \vert i,h }_{t,d}$ the likelihood of state $\stage^{i^\prime,j^\prime \vert i,h}_t$ maintained by the diffusion HMM, and $\belief^{i^\prime,j^\prime \vert i,h }_{t,d}$ the corresponding belief.
The clean hypothesis $\hypothesis=0$ consists of a single benign state $\{\stage^{0,0\vert i, 0}\}$, and the attack hypothesis $h \in \hSet^i$ consists of the set of states $\{\stage^{0,0\vert i, 0}\} \cup_{i \in \iSet^{h}} \mathcal{S}^i$. 
An attack hypothesis $h >0$ corresponds to an initial compromise originating from subnetwork $h$ ($\tranProb_t^{0,0 \rightarrow h,1 \vert i,h} > 0$, while $\tranProb_t^{0,0 \rightarrow i^{\prime},j \vert i, h} = 0$ for all $i^{\prime} \in \iSet \setminus \{h\}$ and $j \in \jSet$).
Each subnetwork updates its belief using transition probabilities:
\begin{align}
   \acute{\alpha}^{i^\prime,j^\prime\vert i, h}_{t,d}=
   \begin{cases}
   \alpha^{0,0 \vert i, h}_{t-1,d}\tranProb^{0,0  \rightarrow 0,0 \vert i, h},\\ \qquad \qquad \qquad \qquad \text{for } i^\prime = 0,  j^\prime=0,\\
   \alpha^{0,0 \vert i, h}_{t-1,d}\tranProb^{0,0  \rightarrow i^\prime,1 \vert i, h} + \alpha^{i^\prime,1 \vert i, h}_{t-1,d}\tranProb^{i^\prime,1 \rightarrow i^\prime,1},\\ \qquad \qquad \qquad \qquad \text{for } i^\prime = h, j^\prime=1,\\
   \sum_{i^{\prime\prime} \in \iSet^h}\sum_{j^{\prime\prime} \in \jSet^{i^{\prime\prime}}} \alpha^{i^{\prime\prime},j^{\prime\prime} \vert h}_{t-1,d}\tranProb_t^{i^{\prime\prime},j^{\prime\prime} \rightarrow i^\prime,j^\prime}, \\ \qquad \qquad \qquad \qquad \text{otherwise},
   \end{cases}
\end{align}
where $\alpha^{i^{\prime},j^{\prime} \vert i, h}_{t-1,d}$ is the likelihood of state $s^{i^{\prime},j^{\prime} \vert i, h}$ at time $t-1$, and $\acute{\alpha}^{i^{\prime},j^{\prime} \vert i,h}_{t,d}$ captures the transition-evolved likelihood. 
Next, each subnetwork incorporates locally observed alerts to compute the state likelihood:
\begin{align}
    \hat{\alpha}^{i^{\prime},j^{\prime}\vert i,h}_{t,d}= \P\left(\alertVectorRv^{i}_t=\alertVector^{i}_t\vert \stageRv_t=\stage^{i^{\prime},j^{\prime}}\right)^\kappa \acute{\alpha}^{i^{\prime},j^{\prime}\vert i,h}_{t,d},
\end{align}
where $\kappa \in \mathbbm{N}$.

To enable consensus across subnetworks, we introduce a symmetric likelihood-sharing matrix $\mathbf{B}$ satisfying $\mathbf{B}\cdot\mathbf{1} = \mathbf{1}$ and $\mathbf{B}^T = \mathbf{B}$. Let $b^{i,i^{\prime}}$ denote the $(i,i^\prime)$ entry of the belief-sharing matrix $\mathbf{B}$, representing the weight that subnetwork $i$ assigns to the local likelihoods of subnetwork $i^{\prime}$. The updated likelihood is computed as:
\begin{align}
   \alpha^{i^{\prime},j^{\prime} \vert i, h}_{t,d} &=
       \exp\left(\sum_{i^{\prime\prime} \in \iSet^h} b^{i,i^\prime} \ln\left(\hat{\alpha}^{i^{\prime}, j^{\prime} \vert i^{\prime\prime}, h}_{t,d}\right)\right).\label{eq:diffusion_geomean}
\end{align}
We normalize the aggregated likelihood to obtain the belief distribution as follows:
\begin{align}
   \belief^{i^\prime,j^\prime\vert i,h}_{t,d} = \frac{ \alpha^{i^\prime,j^\prime \vert i, h}_{t,d}}{\alpha^{0,0 \vert i, h}_{t,d} +\sum_{i^\prime\in \iSet^h}\sum_{j^\prime\in \mathcal{J}}\alpha^{i^\prime,j^\prime \vert i, h}_{t,d}}.
\end{align}
resulting belief distribution is then used to determine which subnetworks to block (see Section~\ref{sec:threat_blocking}).

\section*{Appendix: KL Divergence Bound Between Centralized and {\Chain} Beliefs}\label{sec:proof_chain}

\begin{figure}[!t]
\begin{center}
    \includegraphics[width = \linewidth]{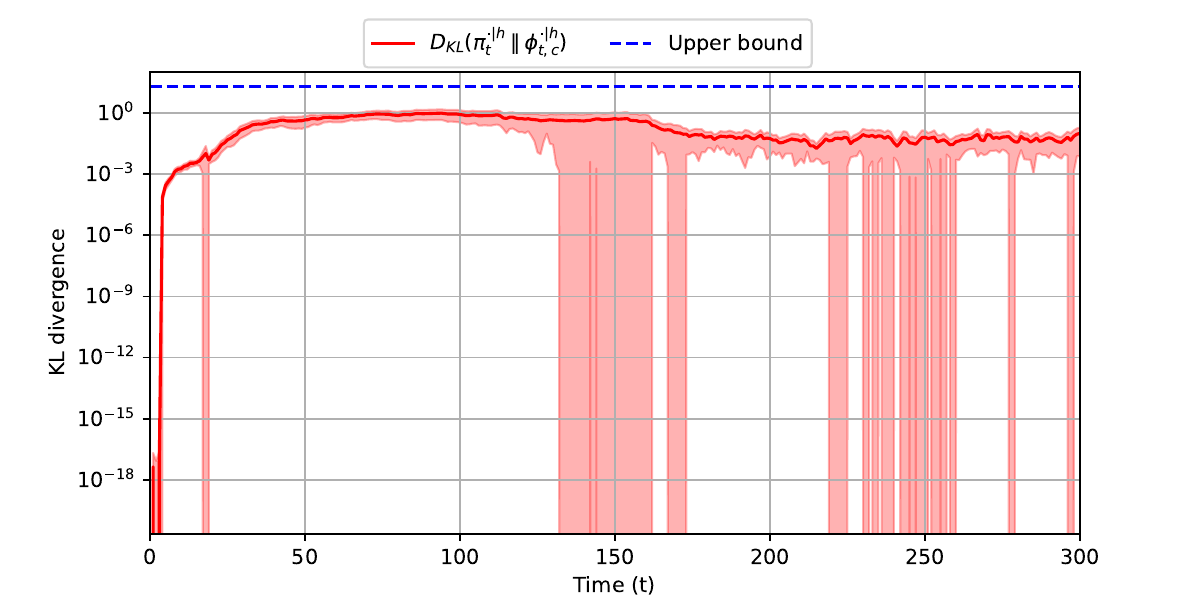}
      \caption{Average KL divergence between central and partitioned belief filtering with 95\% confidence intervals. The simulation is conducted without blocking or termination.}\label{fig:KL_div}
\end{center}
\end{figure}
We evaluate the discrepancy between the centralized belief $\pi^{\cdot,\cdot \vert h}_t$ and the {\chain} belief $\belief^{\cdot,\cdot \vert h}_{t,p}$ using the \ac{KL} divergence. 
\begin{align}
    D_{KL}(\belief^{\cdot,\cdot \vert h}_t \parallel \belief^{\cdot,\cdot \vert h }_{t,p})&= \sum_{i\in \iSet^h}\sum_{j\in\jSet^i} \belief^{i,j \vert h}_t \log \left(\frac{\belief^{i,j \vert h}_t }{\belief^{i,j \vert h }_{t,p}}\right).\label{eq:KL_div}
\end{align}
Plugging~\eqref{eq:chain_observation} and~\eqref{eq:chain_normalize} into~\eqref{eq:chain_central}, we can rewrite it as:
\begin{align}
    {\belief}^{i,j\vert h}_{t,p}
    &={\eta}^h_{t,p} \P(\alertVectorRv^{\iSet}_t = \alertVector^{\iSet}_t \vert \stageRv_t = \stage^{i,j})\acute{\alpha}^{i,j\vert h}_{t,p}\prod_{i^\prime\in\iSet^h}\eta^{i^\prime,h}_{t,p} \\&\quad\cdot\hspace{-3mm}\prod_{i^{\prime\prime} \in \iSet^h \cap \iSet^{\uparrow}(i)}
    \hspace{-3mm}\acute{\alpha}^{i^{\prime\prime},J+1\vert h}_{t,p}
    \hspace{-3mm}\prod_{i^{\prime\prime\prime} \in \iSet^h \setminus \iSet^{\uparrow}(i)}\hspace{-3mm}\acute{\alpha}^{i^{\prime\prime\prime},0\vert h}_{t,p}.\label{eq:chain_central_detailed}
\end{align}
where ${\eta}^h_{t,p}$ is expressed as
\begin{align}
    {\eta}^h_{t,p}\hspace{-1mm} = \hspace{-1mm}\frac{1}{\begin{array}{l}
        \bigg(\P(\alertVectorRv^{\iSet}_t = \alertVector^{\iSet}_t \vert \stageRv_t = \stage^{i,0}) \acute{\alpha}^{i,0\vert h}_{t,p} \\ \qquad +\sum_{i\in\iSet^h}\sum_{j\in\jSet}\P(\alertVectorRv^{\iSet}_t = \alertVector^{\iSet}_t \vert \stageRv_t = \stage^{i,j}) \acute{\alpha}^{i,j\vert h}_{t,p} \bigg)
        \\ \qquad \cdot \prod_{i^\prime\in\iSet^h}\eta^{i^\prime,h}_{t,p} \prod_{i^{\prime\prime} \in \iSet^h \cap \iSet^{\uparrow}(i)}\acute{\alpha}^{i^{\prime\prime},J+1\vert h}_{t,p}
        \\ \qquad \cdot \prod_{i^{\prime\prime\prime} \in \iSet^h \setminus \iSet^{\uparrow}(i)}\acute{\alpha}^{i^{\prime\prime\prime},0\vert h}_{t,p}
    \end{array}},
\end{align}
where $\eta^{i^\prime,h}_{t,p}$ is the normalization factor used in~\eqref{eq:chain_normalize}.
Plugging~\eqref{eq:chain_central_detailed} into~\eqref{eq:KL_div}, we derive:
\begin{align}
   &D_{KL}(\belief^{\cdot,\cdot \vert h}_t \parallel \belief^{\cdot,\cdot \vert h }_{t,p})\\&=\sum_{i\in \iSet^h}\hspace{-1mm}\sum_{j\in\jSet^i} \hspace{-1mm}\belief^{i,j \vert h}_t \log \hspace{-1mm} \left(\hspace{-1.5mm}\frac{ \alpha^{i,j}_{t} }{\splitfrac{\acute{\alpha}^{i,j\vert h}_{t,p} \prod_{i^{\prime\prime} \in \iSet^h \cap \iSet^{\uparrow}(i)}\acute{\alpha}^{i^{\prime\prime},J+1\vert h}_{t,p}}{\prod_{i^{\prime\prime\prime} \in \iSet^h \setminus \iSet^{\uparrow}(i)}\acute{\alpha}^{i^{\prime\prime\prime},0\vert h}_{t,p}}}\hspace{-1.5mm}\right)\\
    &\qquad + \log\left(\frac{\eta_t}{{\eta}^{h}_{t,p}\prod_{i^{\prime}\in\iSet^h}\eta^{i^{\prime},h}_{t,p}}\right),\\
    &=\sum_{i\in \iSet^h}\hspace{-1mm}\sum_{j\in\jSet^i} \belief^{i,j \vert h}_t \log \hspace{-1mm}\left(\hspace{-1.5mm}\frac{ \alpha^{i,j}_{t} }{\splitfrac{\acute{\alpha}^{i,j\vert h}_{t,p} \prod_{i^{\prime\prime} \in \iSet^h \cap \iSet^{\uparrow}(i)}\acute{\alpha}^{i^{\prime\prime},J+1\vert  h}_{t,p}}{\prod_{i^{\prime\prime\prime} \in \iSet^h \setminus \iSet^{\uparrow}(i)}\acute{\alpha}^{i^{\prime\prime\prime},0\vert  h}_{t,p}}}\hspace{-1.5mm}\right)\\
    &\qquad+\log(\eta_t)
    \\&\qquad+\log\bigg(\big(\P(\alertVectorRv^{\iSet}_t = \alertVector^{\iSet}_t \vert \stageRv_t = \stage^{i,0}) \acute{\alpha}^{i,0\vert h}_{t,p}+ \\&\qquad\qquad+\sum_{i\in\iSet^h}\sum_{j\in\jSet}\P(\alertVectorRv^{\iSet}_t = \alertVector^{\iSet}_t \vert \stageRv_t = \stage^{i,j})\acute{\alpha}^{i,j\vert  h}_{t,p} \big) \\&\qquad\qquad\prod_{i^{\prime\prime} \in \iSet^h \cap \iSet^{\uparrow}(i)}\acute{\alpha}^{i^{\prime\prime},J+1\vert  h}_{t,p}
    \prod_{i^{\prime\prime\prime} \in \iSet^h \setminus \iSet^{\uparrow}(i)}\acute{\alpha}^{i^{\prime\prime\prime},0\vert  h}_{t,p}\bigg),
\end{align}

We observe that the last two terms are non-positive, since the arguments of the logarithm are bounded by one. Consequently, we obtain an upper bound by omitting these terms, which yields inequality (a). Next, we apply the arithmetic–geometric mean inequality to further relax the expression, leading to inequality (b) as follows:
\begin{align}
    &D_{KL}(\belief^{\cdot,\cdot \vert h}_t \parallel \belief^{\cdot,\cdot \vert h }_{t,p})\\&\myleqA \hspace{-1mm}\sum_{i\in \iSet^h}\sum_{j\in\jSet^i} \belief^{i,j \vert h}_t \log \hspace{-1mm}\left(\hspace{-1mm}\frac{ \alpha^{i,j}_{t} }{\splitfrac{\acute{\alpha}^{i,j\vert h}_{t,p} \prod_{i^{\prime\prime} \in \iSet^h \cap \iSet^{\uparrow}(i)}\acute{\alpha}^{i^{\prime\prime},J+1\vert  h}_{t,p}}{\prod_{i^{\prime\prime\prime} \in \iSet^h \setminus \iSet^{\uparrow}(i)}\acute{\alpha}^{i^{\prime\prime\prime},0\vert  h}_{t,p}}}\hspace{-1mm}\right)\\
    &\myleqB  \hspace{-1mm}\log \hspace{-1mm}\left(\hspace{-1mm}\sum_{i\in \iSet^h}\sum_{j\in\jSet^i} \belief^{i,j \vert h}_t\frac{ \alpha^{i,j}_{t} }{\splitfrac{\acute{\alpha}^{i,j\vert  h}_{t,p} \prod_{i^{\prime\prime} \in \iSet^h \cap \iSet^{\uparrow}(i)}\acute{\alpha}^{i^{\prime\prime},J+1\vert  h}_{t,p}}{\prod_{i^{\prime\prime\prime} \in \iSet^h \setminus \iSet^{\uparrow}(i)}\acute{\alpha}^{i^{\prime\prime\prime},0\vert  h}_{t,p}}}\hspace{-1mm}\right)\hspace{-1mm}.
\end{align}
This indicates that the upper bound of the KL divergence at time $t$ depends solely on the belief at time $t-1$, independent of the alert observations. By applying~\eqref{eq:alpha_central} and~\eqref{eq:chain_prior}, we can reformulate the above expression as follows:

\begin{align}
    &D_{KL}(\belief^{\cdot,\cdot \vert h}_t \parallel \belief^{\cdot,\cdot \vert h }_{t,p})\\&\leq
    \log \Bigg(\sum_{i\in \iSet^h}\sum_{j\in\jSet^i} \belief^{i,j \vert h}_{t}\frac{ \sum_{i^{\prime} \in \iSet, j^\prime\in \jSet} \belief^{i^\prime,j^\prime \vert h}_{t-1} \cdot  \tranProb^{i^{\prime},j^\prime \rightarrow i,j}_{t-1}}{\beta_{t-1}}\Bigg).
\label{eq:KL_div_upperbound1}
\end{align}
where 
\begin{align}
\begin{aligned}
    \beta_{t-1}=&\sum_{j^{\prime} \in \jSet^{i \vert h}}\left(\distributedBelief^{i,j^{\prime} \vert h}_{t-1,p} \cdot  \tranProb^{i,j^{\prime} \rightarrow i,j \vert h}_{t-1}\right)\\
    &\cdot\hspace{-5mm}\prod_{i^{\prime\prime} \in \iSet^h \cap \iSet^{\uparrow}(i)}\hspace{-5mm}\left(\distributedBelief^{i^{\prime\prime},J \vert h}_{t-1,p} \cdot  \tranProb^{i^{\prime\prime},J \rightarrow i^{\prime\prime},J+1 \vert  h}_{t-1} + \distributedBelief^{i^{\prime\prime},J+1 \vert  h}_{t-1,p} \right)\\
    &\cdot\hspace{-5mm}\prod_{i^{\prime\prime\prime} \in \iSet^h \setminus \iSet^{\uparrow}(i)}\hspace{-5mm}\left(\distributedBelief^{i^{\prime\prime\prime},0 \vert h}_{t-1,p} \cdot(1-  \tranProb^{i^{\prime\prime\prime},0 \rightarrow i^{\prime\prime\prime},1 \vert  h}_{t-1})\right).\end{aligned}
\end{align}
The upperbound of right-hand side of~\eqref{eq:KL_div_upperbound1} can be expressed as, 
\begin{align}
    &D_{KL}(\belief^{\cdot,\cdot \vert h}_t \parallel \belief^{\cdot,\cdot \vert h }_{t,p})\\&\leq
    \log \Bigg(\sum_{i\in \iSet^h}\sum_{j\in\jSet^i} \belief^{i,j \vert h}_t\frac{ \sum_{i^{\prime} \in \iSet, j^\prime\in \jSet} \belief^{i^\prime,j^\prime \vert h}_{t-1} \cdot  \tranProb^{i^{\prime},j^\prime \rightarrow i,j}_{t-1}}{\beta_{t-1}}\Bigg)\\&\leq
    \log \Bigg(\max\frac{ \sum_{i^{\prime} \in \iSet, j^\prime\in \jSet} \belief^{i^\prime,j^\prime}_{t-1} \cdot  \tranProb^{i^{\prime},j^\prime \rightarrow i,j}_{t-1}}{\beta_{t-1}}\Bigg).
\label{eq:KL_div_upperbound2}
\end{align}
Recalling assumptions~\ref{assume:belief_lowerbound} and~\ref{assume:Bounded_transition}, we proceed by maximizing the numerator and minimizing the denominator. The maximum value of the numerator is  $q_{\text{max}}$. We observe that: i) The minimum of $1-\tranProb^{i,0\rightarrow i,1 \vert h}_{t-1}$ is given by $\min\left(1- \sum_{i^\prime \in \iSet^h \cap \iSet^{\uparrow}_d(i)}w^{i^\prime,i\vert h}\cdot 1\cdot q_{\text{max}} \right)= 1-q_{\text{max}}$ since the weights sum to one. ii) Consequently, we define $q_{\text{min}}=1- q_{\text{max}}$ based on the assumption that each state transitions either to itself or to the next state. Therefore, the transition probability satisfies $\tranProb^{i,j^{\prime} \rightarrow i,j \vert h}_{t-1} \in [q_{\text{min}}, q_{\text{max}}]$. Using these bounds, we now rewrite the denominator as follows:
\begin{align}
\begin{split}
    \beta_{t-1}=&\sum_{j^{\prime} \in \jSet^{i \vert h}}\left(\epsilon^2 q_{\text{min}}\right) \hspace{-5mm}\prod_{i^{\prime\prime} \in \iSet^h \cap \iSet^{\uparrow}(i)}\hspace{-5mm}\left(\epsilon \cdot  q_{\text{min}}+ \epsilon \right)\hspace{-5mm}\prod_{i^{\prime\prime\prime} \in \iSet^h \setminus \iSet^{\uparrow}(i)}\hspace{-5mm}\left(\epsilon \cdot(1-  q_{\text{max}})\right),
    \end{split}\\
\begin{split}
    =&\sum_{j^{\prime} \in \jSet^{i \vert h}}\left(\epsilon^2 q_{\text{min}}\right)\hspace{-5mm} \prod_{i^{\prime\prime} \in \iSet^h \cap \iSet^{\uparrow}(i)}\hspace{-5mm}\left(\epsilon \cdot  q_{\text{min}}+ \epsilon \right)\hspace{-5mm}\prod_{i^{\prime\prime\prime} \in \iSet^h \setminus \iSet^{\uparrow}(i)}\hspace{-5mm}\left(\epsilon q_{\text{min}})\right),
    \end{split}\\
    \begin{split}
        \geq&\sum_{j^{\prime} \in \jSet^{i \vert h}}\left(\epsilon^2 q_{\text{min}}\right) \hspace{-5mm}\prod_{i^{\prime\prime} \in \iSet^h \cap \iSet^{\uparrow}(i)}\hspace{-5mm}\left(\epsilon   q_{\text{min}} \right)\hspace{-5mm}\prod_{i^{\prime\prime\prime} \in \iSet^h \setminus \iSet^{\uparrow}(i)}\hspace{-5mm}\left(\epsilon q_{\text{min}})\right),
    \end{split}\\
    \begin{split}
        =&\sum_{j^{\prime} \in \jSet^{i \vert h}} \epsilon^{\vert \iSet^h \vert +1}   q_{\text{min}}^{\vert \iSet^h \vert}.
    \end{split}
\end{align}
Therefore, the upperbound of the KL divergence is 
\begin{align}
    D_{KL}(\belief^{\cdot,\cdot \vert h}_t \parallel \belief^{\cdot,\cdot \vert h }_{t,p})&\leq
    \log \Bigg(\frac{q_{\text{max}}}{\sum_{j^{\prime} \in \jSet^{i \vert h}} \epsilon^{\vert \iSet^h \vert +1}   q_{\text{min}}^{\vert \iSet^h \vert}}\Bigg).
\label{eq:KL_div_upperbound3}
\end{align}
Let $\epsilon =0.001$. We conduct 100 simulations without applying blocking actions to evaluate the average KL divergence between the central belief and the {\chain} belief. Fig.~\ref{fig:KL_div} shows the KL divergence as a function of time. This figure shows that the mean KL divergence remains within the theoretical upper bounds, demonstrating that the proposed belief update mechanism exhibits stable performance and converges toward the central belief by time $t=300$.  
\section*{Appendix: Transformer-based Autoencoder}\label{sec:transformer}

The transformer‑based autoencoder architecture consists of an embedding layer, a stack of transformer encoder layers, and a final expansion layer. Let $n_{likelihood}$ denote the input likelihood dimensionality and $n_{embed}$ the embedding size. The embedding layer applies a linear transformation to compress $n_{likelihood}$ into $n_{embed}$, incorporating a causal mask to ensure reconstructions at time $t$ exclude future inputs $t'>t$, and a positional encoder to preserve temporal order. The transformer employs 4 encoder layers with 8 multi-head attention heads, 0.1 dropout, and feed-forward dimension $2n_{embed}$ to capture temporal correlations. The expansion layer linearly maps back to $n_{likelihood}$. Input dimensionality varies by belief mode: $n_{likelihood}=85$ for \emph{Centralized} and $85 \cdot I$ for \emph{Partitioned}. Grid search over $n_{embed} \in \{8,16,32,64\}$ identifies $n_{embed}=16$ as optimal.

The input to the model consists of the likelihoods of all hypotheses and subnetworks. Because these likelihoods are computed by repeatedly multiplying values smaller than 1 at each time step, they naturally decay toward zero over the course of a simulation. To prevent this collapse and avoid trivial zero‑valued predictions near the end of an episode, the likelihoods are normalized at every time step. The model is trained to reconstruct these normalized likelihoods, using KL divergence as the reconstruction loss to reflect the probabilistic structure of the inputs.

We simulate 100 episodes without attack and divide them into training and validation sets using a 90/10 split. The model is trained with early stopping, which terminates training when the validation performance fails to improve for more than five consecutive evaluations. This entire procedure from generating new episodes to training the model is repeated ten times to obtain a robust model. 

\end{appendices}

\end{document}